\documentclass[10pt,journal]{IEEEtran}
\usepackage{amsmath,amssymb,amsfonts,amsthm,mathtools,bm}
\usepackage{float,enumitem}
\floatstyle{ruled}
\newfloat{algorithm}{tbp}{loa}
\floatname{algorithm}{Algorithm}
\usepackage{algpseudocode}
\usepackage{array,booktabs,multirow,makecell,tabularx}
\usepackage{cite}
\usepackage{url}
\usepackage[hidelinks]{hyperref}
\hypersetup{pdftitle={M-QAM MIMO Maximum-Likelihood Detection with QAOA},
  pdfauthor={Burhan G\"ulbahar}}
\newcommand{\aidoi}[1]{doi:\href{https://doi.org/#1}{#1}}
\newcommand{\aiarxiv}[1]{\href{https://arxiv.org/abs/#1}{arXiv:#1}}
\usepackage{graphicx}

\newtheorem{proposition}{Proposition}
\newtheorem{lemma}{Lemma}

\newtheorem{definition}{Definition}

\newcommand{\E}{\mathbb E}
\newcommand{\R}{\mathbb R}

\newcommand{\Acal}{\mathcal A}
\newcommand{\Pcal}{\mathcal P}
\newcommand{\Fcal}{\mathcal F}

\newcommand{\Cov}{\operatorname{Cov}}
\newcommand{\Var}{\operatorname{Var}}
\newcommand{\diag}{\operatorname{diag}}
\newcommand{\ii}{\mathrm i}

\newcommand{\Lcal}{\mathcal L}
\newcommand{\Ecal}{\mathcal E}
\newcommand{\Rea}{\widehat{O}_{E,a}}
\newcommand{\Reb}{\widehat{O}_{E,b}}
\newcommand{\EbBmid}{3.2}
\newcommand{\EbBend}{2.1}
\newcommand{\rpl}{\hat r_{\rm pl}}
\newcommand{\rml}{\hat r_{\rm ML}}
\newcommand{\rs}{\hat r_{s}}
\newcommand{\Bsf}{\mathsf B}

\begin{document}

\title{M-QAM MIMO Maximum-Likelihood Detection with QAOA:
ML-Rate Offline Angle Design and Correlated Infinite-Size
Spin-Glass Models}

\author{Burhan G\"ulbahar
\thanks{B. G\"ulbahar is with the Department of Electrical and Electronics Engineering, Ya\c{s}ar University, 35100 Izmir, Turkey (e-mail: burhan.gulbahar@yasar.edu.tr).}
}

\maketitle

\begin{abstract}
The quantum approximate optimization algorithm (QAOA) targets NP-hard
maximum-likelihood (ML) detection in multiple-input multiple-output (MIMO)
systems. Existing $M$-ary quadrature amplitude modulation (M-QAM) detectors
design angles by expected Ising energy: online per instance, warm-started, or
ramped, while train-once designs remain B/QPSK-only or block-local, leaving
M-QAM without a size-scalable benchmark. Their infinite-size spin-glass
theory assumes independent disorder, matching the retained
covariances at B/QPSK but not M-QAM's
correlated couplings and fields. We develop a correlated infinite-size
multi-species spin-glass framework whose covariance-matched evaluators make
that energy an offline objective with a size-scalable benchmark. In addition,
the ML rate, the exponential rate of sampling the ML string, is for the
first time exploited for QAOA angle design in MIMO detection. The energy
evaluator is \(q\)-free at \(O(p \,4^p)\) cost while the ML rate transfers
angles from a fixed \(q_{\rm ref}\)-qubit reference. Tests reach 4096-QAM,
128 antennas, \(p=30\) and per-symbol SNR 0--45\,dB. In simulations, ML rates fall as a power law \(r_0\,p^{-\alpha}\), with larger
exponents for the sampling design, which tracks exact ML at \(5\times5\)
16-QAM (0--20\,dB) and \(3\times3\) 64-QAM (8--28\,dB) while its bit-error
rate (BER) advantage widens with SNR to two orders of magnitude. The approach points toward near-optimum decoding on deeper noiseless
fault-tolerant quantum (FTQ) circuits.
\end{abstract}

\begin{IEEEkeywords}
QAOA, MIMO detection, M-QAM, maximum-likelihood detection, correlated Gaussian Ising model, multi-species spin glass, FTQ,   massive MIMO.
\end{IEEEkeywords}

\section{Introduction}

Maximum-likelihood (ML) detection of \(M\)-ary quadrature amplitude
modulation (\(M\)-QAM) in large multiple-input multiple-output (MIMO)
systems is NP-hard in the worst case, while tractable linear receivers
remain suboptimal~\cite{verdu1989,singh2022ising,yang2015fifty}. The quantum
approximate optimization algorithm (QAOA)~\cite{farhi2014quantum} instead
samples a depth-\(p\) Ising state, promising near-optimum MIMO
detection performance~\cite{gulbahar2024mimo}.
Available
detectors, however, optimize angles per channel
online~\cite{cui2022qaoa,cui2022general}, add per-channel warm starts to a
linear-ramp family~\cite{paul2026warmstart}, or reuse an offline bank across
QR-domain blocks regularized by minimum mean-square error
(MMSE)~\cite{zeng2026block}. Each tunes angles to the expected Ising energy rather than to the rate at
which the ML string is sampled, with results reported for specific
modulations and size classes. A train-once cross-size \(M\)-QAM
prescription, an angle objective built on that ML rate, and a
prescribed-depth, finite-\(q\)-independent performance benchmark (\(q\) is the qubit
count) remain unavailable.
Ising-machine detectors already reach near-ML accuracy at high-order
QAM through dedicated regularized and perturbation-based
formulations~\cite{singh2022ising} while  we target gate-model circuits with angles fixed offline. The fault-tolerant quantum (FTQ) era, rather than near-term noisy hardware,
is targeted here, supporting the deep, noiseless QAOA circuits that ML-level
sampling needs; IBM projects Quantum Starling by 2029, running \(10^8\) gates
on \(200\) logical qubits~\cite{ibm2025ftqc}. Circuits are accordingly
all-to-all, leaving logical synthesis of the continuous angle rotations
\((\gamma,\beta)\) and its distillation overhead outside the present scope.
This article aims at that regime: angles designed
on the ML rate instead of the Ising energy, that rate decaying as a power law
$p^{-\alpha}$ in depth, and frozen angle packs that transfer across system sizes. Simulations support near-optimal \(M\)-QAM decoding  while a
matching large-size theory remains open. 
Both routes are size-scalable in design cost: the correlated evaluator
\(V_p^{\rm corr}\) spends \(4^p\) paths at \(O(p \,4^p)\) cost whatever the
antenna count is, so one evaluation serves every array size, while the ML
rate is paid once per pack, on a frozen reference at \(q_{\rm ref}\) under
classical evaluation and transferred, or at the deployed size directly when
shots are available. Only the energetic route also provides a size-independent
evaluator of its objective, making a prescribed-depth benchmark practical; the
corresponding benchmark for the sampling  remains open.

The earlier BPSK/QPSK work transfers infinite-size independent
Sherrington--Kirkpatrick (SK) angles~\cite{gulbahar2024mimo,gulbahar2025mvsic},
but omits M-QAM's weighted correlations for $n_t \times n_r$ MIMO system: same-axis intra-symbol couplings have
weight-dependent \(O(n_r)\) means, R/I counterparts vanish, and fields
correlate with incident couplings. Independent SK and
Basso--Gamarnik--Mei--Zhou (BGMZ) recurrences lack fixed-level \(V_p\) theory
for this axis/bit/weight
structure~\cite{farhi2022sk,bgmz2022constant,basso2022highdepth}; the spin--boson route~\cite{boulebnane2025spinboson} also excludes the block-Wishart ensemble, i.e., couplings read off the
Gram matrix \(\widetilde H^{\!\top}\widetilde H\) of the realified Gaussian
channel \(\widetilde H\) (a Wishart matrix) weighted and grouped into
per-symbol blocks by the bit-to-symbol map.  We derive this geometry and test covariance-matched
surrogates.

\subsection{Contributions}\label{sec:contributions}

\emph{Offline} means ensemble/reference-trained without per-instance
reoptimization; transfer keeps modulation, signal-to-noise ratio (SNR)
convention, and aspect ratio
fixed~\cite{brandao2018fixed,galda2021transferability,shaydulin2022parameter}.
\begin{itemize}[leftmargin=*,itemsep=1pt,topsep=2pt,parsep=0pt]
\item \emph{Correlation-aware fixed-depth framework.} Fixed-depth
infinite-size QAOA theory covers ensembles with i.i.d.\ couplings, i.e., independent
SK, pure and mixed
$k$-spin~\cite{farhi2022sk,bgmz2022constant,basso2022highdepth}, whose $V_p$
recurrences also assume no field--coupling correlation; $M$-QAM satisfies
neither condition. We supply the missing evaluator as a covariance-matched
multi-species action: the scalar form uses $4^p$ paths, independent of the
transmit-antenna count $n_t$ and of $q$, with exact $O(p\,4^p)$ fast
Walsh--Hadamard transform (FWHT) readout; a bit-block refinement ($4^{p\,b}$
paths for $b$ bits per symbol) is left for future work. It is a structural surrogate, not
the proved correlated-Wishart $V_p$ limit, which remains open. However, no earlier
fixed-depth evaluator carries these projected correlations.
\item \emph{Exact coefficient geometry.} BPSK/QPSK
transfers~\cite{gulbahar2024mimo,gulbahar2025mvsic} reuse independent SK
angles, whose retained covariances are those of independent disorder, a
match that holds for the projected moment family at BPSK/QPSK and
fails at $M$-QAM.
Proposition~\ref{prop:moments} gives: deterministic
intra-symbol blocks, zero R/I partners, centered inter-symbol Wishart terms,
weighted fields, and incident $J$--$h$ correlation.
\item \emph{Energy versus sampling.} Two offline objectives are compared on
identical instances: minimizing the correlated energy surrogate
$V_p^{\rm corr}$ of $V_p$, and minimizing the ML rate (\(-q^{-1}\) times the
mean log ML-word mass), which has not been exploited for MIMO angle design. Both
scale, but differently: $V_p^{\rm corr}$ is $q$-free by construction, whereas
the ML rate is paid once at $q_{\rm ref}$ and transferred. Their annealed rates $r_0 \, p^{-\alpha}$ nearly coincide at $p=1$ but separate into $\alpha=0.162$ and $0.451$ for 16-QAM at $q=20$, 20\,dB, and sampling
reaches exact-ML floors at depths the energy route does not: resolvable
bit-error-rate (BER) gaps of $142\times$ at $p=4$ here and $36\times$ at
$p=6$ for 64-QAM at $q=18$, 28\,dB. With $N=1000$ instances, sampling becomes indistinguishable from exact ML
at the $1/(Nq)$ resolution from $p=5$ (16-QAM) and $p=10$ (64-QAM).
\end{itemize}
Table~\ref{tab:symbols} lists recurring symbols; others are defined
locally. In Table~\ref{tab:related}, objective E/R means expected energy/ML-rate, and $\dagger/\ddagger$
denote $V_p/V_p^{\rm corr}$; ``$M$''  and ``Max. $n_t{\times}n_r$'' denote tested detector-performance
constellations and maximum system sizes, respectively.  ``Bench.''
marks a size-independent performance evaluator of the objective,
which the ML-rate route still lacks.
Section~\ref{sec:related} places the work among quantum and physics-inspired
detectors and Section~\ref{sec:qaoa_sk} fixes the QAOA setting;
Sections~\ref{sec:mqam_model}--\ref{sec:infsize} derive the M-QAM coefficient
geometry, the comparison metric, and the two offline angle objectives.
Section~\ref{sec:numerics} reports simulations; Section~\ref{conclusion}
concludes.

\begin{table}[t]
	\centering
	\caption{Quantum and physics-inspired M-QAM MIMO ML detectors}
	\label{tab:related}
	\begingroup
	\footnotesize
	\setlength{\tabcolsep}{0.8pt}
	\setlength{\arrayrulewidth}{0.3pt}
	\renewcommand{\arraystretch}{1.0}
	\begin{tabular}{|
			>{\raggedright\arraybackslash}p{0.136\columnwidth}|
			>{\raggedright\arraybackslash}p{0.126\columnwidth}|
			>{\centering\arraybackslash}p{0.194\columnwidth}|
			>{\centering\arraybackslash}p{0.175\columnwidth}|
			>{\centering\arraybackslash}p{0.16\columnwidth}|
			>{\centering\arraybackslash}p{0.058\columnwidth}|
			>{\centering\arraybackslash}p{0.091\columnwidth}|}
		\hline
		Ref. & Method & $M$ & \makecell[c]{Max. $n_t{\times}n_r$} &
		Ang. & Obj. & \makecell[c]{Bench.} \\
		\hline
		\multicolumn{7}{|l|}{\emph{Search, annealing, and Ising machines}}\\
		\hline
		\cite{botsinis2013,kim2019quamax,kim2025xresq,kim2021paramax}
		& \makecell[l]{QWSA,\\QA/RA,\\PT} & $\{2,4,16\}$ &
		$1024\times1024$ & --- & --- & --- \\
		\hline
		\cite{singh2022ising,jamieson2025mmgap,zhu2026pimi}
		& \makecell[l]{CIM/PIC,\\PIM} & $\{2,4,16,64\}$ &
		$64\times256$ & --- & --- & --- \\
		\hline
		\multicolumn{7}{|l|}{\emph{Gate-model QAOA}}\\
		\hline
		\cite{cui2022qaoa} & --- & $2$ & $3\times3$ & online & E & --- \\
		\hline
		\cite{cui2022general} & gen. H & $4$ & $1\times1$ & online & E & --- \\
		\hline
		\cite{gulbahar2024mimo} & SK & $\{2,4\}$ &
		$128\times128$ & offline & E$^{\dag}$ & \checkmark \\
		\hline
		\cite{gulbahar2025mvsic} & RQAOA & $\{2,4\}$ &
		$64\times64$ & offline & E$^{\dag}$ & \checkmark \\
		\hline
		\cite{gulbahar2025rrtq} & RRT-Q & $\{2,4\}$ &
		$256\times256$ & offline & E$^{\dag}$ & \checkmark \\
		\hline
		\cite{paul2026warmstart} & WS & $16$ & $4\times4$ &
		\makecell[c]{online ramp} & E & --- \\
		\hline
		\cite{zeng2026block} & block & $16$ & $16\times16$ &
		offl. bank & E & --- \\
		\hline
		\multirow{2}{\linewidth}{\textbf{This work}} &
		\multirow{2}{\linewidth}{\textbf{corr. MS}} &
		$\mathbf{\{16,64\}}$ &
		$\mathbf{128\times128}$ & \textbf{offline} &
		\textbf{E}$^{\ddag}$ & \checkmark \\
		\cline{3-7}
		& & \makecell[c]{$\mathbf{\{16,64,128,}$\\$\mathbf{256,1024,}$\\$\mathbf{4096\}}$}
		& \makecell[c]{$\mathbf{6\times6}$,\\$\mathbf{2\times48}$} & \textbf{offline} &
		\textbf{R} & --- \\
		\hline
	\end{tabular}
	\endgroup
\end{table}

\section{Related Works}\label{sec:related}
Among the cited gate-model MIMO studies in Table~\ref{tab:related}, angle
selection uses expected energy (online, warm-started, ramp-selected, banked,
or through $V_p$) rather than the ML sampling rate.
A quantum weighted sum algorithm (QWSA) is tested on QPSK and
8-PSK~\cite{botsinis2013}. Quantum annealing
(QA)~\cite{kim2019quamax}, reverse annealing
(RA)~\cite{kim2025xresq} and parallel tempering (PT)~\cite{kim2021paramax}
reach the largest arrays in the table, and coherent Ising machines
(CIM)~\cite{singh2022ising}, physics-inspired computation
(PIC)~\cite{jamieson2025mmgap} and a probabilistic Ising machine with inertia
(PIM)~\cite{zhu2026pimi} extend them to higher orders. All provide
finite-size evidence rather than ensemble M-QAM limits, and their largest
entries are classical or sampling-capacity studies rather than
detector-performance runs.
Gate-model studies use plain online BPSK/QPSK angles \cite{cui2022qaoa}, a general Hamiltonian (gen.\ H)
construction \cite{cui2022general}, per-channel M-QAM warm starts (WS)
selected over a fixed linear-ramp family~\cite{paul2026warmstart}, or
SNR-indexed block banks with parameter transfer~\cite{zeng2026block}. The
earlier B/QPSK detectors follow the energy route with offline angles, through
the infinite-size SK
model~\cite{gulbahar2024mimo}, majority voting with recursive QAOA
(RQAOA)~\cite{gulbahar2025mvsic}, and recursive-repetitive truncated QAOA
(RRT-Q)~\cite{gulbahar2025rrtq}; M-QAM is not modeled there, as the required
correlated infinite-size spin-glass model was unavailable. This work designs
against the correlated multi-species (corr.\ MS) ensemble of
Section~\ref{sec:mqam_model}.
To our knowledge, none of the cited works gives a full-system,
correlation-aware, size-independent M-QAM benchmark.

\section{QAOA ML Detection}\label{sec:qaoa_sk}
For sign vector $z=(z_1,\ldots,z_q)\in\{\pm1\}^q$, the Ising cost is
\begin{equation}
 C(z)=\sum_{i<j}J_{ij}z_i z_j+\sum_i h_i z_i + C_{const}
 \label{eq:Ising_cost}
\end{equation}
where $J_{ij}$ and $h_i$, with bit indices $i,j\in\{1,\dots,q\}$ and
couplings only for $i\ne j$, collect into $J=(J_{ij})\in\R^{q\times q}$
(symmetric, zero diagonal) and $h=(h_i)\in\R^{q}$, while
$C_{const}$ shifts the spectrum uniformly. Therefore, the Hamiltonian becomes
$H_C=\sum_{i<j}J_{ij}Z_i Z_j+\sum_i h_i Z_i+C_{const}$ with Pauli-Z operators.
From uniform input $|+\rangle^{\otimes q}$,
QAOA uses operators
$U_C(\gamma)=e^{-\ii\gamma H_C}$ and
$U_B(\beta)=e^{-\ii\beta B}$, \(B=\sum_iX_i\) with Pauli-X operators, and prepares
\(\lvert\gamma,\beta\rangle\triangleq 
U_B(\beta_p)U_C(\gamma_p)\cdots U_B(\beta_1)U_C(\gamma_1)
\lvert+\rangle^{\otimes q}\)  \cite{farhi2014quantum}. Its
\(2p\) angles minimize \(\langle H_C\rangle\) or  \(\langle C \rangle\) where  
\(\langle O\rangle=\langle\gamma,\beta|O|\gamma,\beta\rangle\)  \cite{gulbahar2024mimo,gulbahar2025mvsic}.

\section{M-QAM: Correlated  Ising Model}\label{sec:mqam_model}
Representing complex channel matrix as  $H=H_R+\ii H_I$  gives the following
\cite{gulbahar2024mimo} where $y_{\rm c}=Hs_{\rm c}+n_{\rm c}$ and $\rho\triangleq n_r/n_t$ for $n_t$ transmit and $n_r$ receive
antennas:
\begin{equation}
 \widetilde H=\begin{bmatrix} H_R&-H_I\\ H_I&H_R\end{bmatrix};\,
  \widetilde y=\begin{bmatrix} y_R\\ y_I\end{bmatrix}; \,
  \widetilde z=\begin{bmatrix} z_R\\ z_I\end{bmatrix}=T \,z
 \label{eq:realification}
\end{equation}
where M-QAM has $L_P=2^{b_P}$ levels for $P\in\{R,I\}$, $M=L_RL_I$, and
$q=n_t(b_R+b_I)$ with $\widetilde H\in\R^{2n_r\times2n_t}$,
$T\in\R^{2n_t\times q}$:
\begin{equation}
 T=\begin{bmatrix}
 2^{b_R-1}I_{n_t}&2^{b_R-2}I_{n_t}&\cdots&I_{n_t}&0\\
 0&2^{b_I-1}I_{n_t}&2^{b_I-2}I_{n_t}&\cdots&I_{n_t}
 \end{bmatrix}
 \label{eq:T_matrix}
\end{equation}
For transmitted
$s\in\{\pm1\}^q$ and $\widetilde n\triangleq[n_R;n_I]$,
$\widetilde y=\widetilde HTs+\widetilde n$,  ML solution is obtained as follows:
\begin{equation}
 \min_{z\in\{\pm1\}^q}\|\widetilde y-\widetilde H Tz\|^2 
 \label{eq:M_QAM_binary_ML}
\end{equation}
Expansion gives
$C(z)=z^TGz-2z^TT^T\widetilde H^T\widetilde y+C_0$  where $C_0=\lVert\widetilde y\rVert^2$,  and for $i<j$, the following is obtained:
\begin{align}
 J_{ij}=2 \,G_{ij}; \, h_i=-2(T^T\widetilde H^T\widetilde y)_i;\, G=T^T\widetilde H^T\widetilde H T
 \label{eq:M_QAM_Jh}
\end{align}
Here $G$ is the $T$-weighted Wishart Gram for Gaussian $\widetilde H$.
\subsection{Correlated Moment Structure and
\texorpdfstring{$A,\Bsf,D$}{A, B, D} Parameters}\label{sec:moments}

Index a bit by $i=(u_i,P_i,r_i)$, $i\in\{1,\dots,q\}$: symbol
$u_i\in\{1,\dots,n_t\}$, component $P_i\in\{R,I\}$, bit position
$r_i\in\{1,\dots,b_{P_i}\}$, and weight
$w_i=2^{b_{P_i}-r_i}$; couplings $J_{ij}$ always carry $i\ne j$. Let
$\Ecal_P\triangleq\sum_rw_{P,r}^2=(L_P^2-1)/3$, $E_s=\Ecal_R+\Ecal_I$ (mean symbol energy),
$b=b_R+b_I$, and $q=n_tb$.
The structured realification \eqref{eq:realification} has per-entry variance
$\sigma_H^2/2$ (complex variance $\sigma_H^2$), and $\widetilde n$ has i.i.d.\
variance $\sigma_n^2/2$. We distinguish
$\mathrm{SNR}_{\rm ps}=E_s/\sigma_n^2$ and
$\mathrm{SNR}_{\rm rx}=n_t\sigma_H^2E_s/\sigma_n^2$, offset by
$10\log_{10}(n_t\sigma_H^2)$ ($10\log_{10}n_t$ here since
$\sigma_H^2{=}1$); quoted SNRs are per-symbol unless stated
otherwise.
Unless subscripted, \(\E,\Var,\Cov\) in this
section are over the joint law of \(H,\widetilde n,s\); conditioning on
\(s\) fixes the word.
Entries of $J$ and $h$ are $2n_r$-term sums of $O(1)$ products
\eqref{eq:M_QAM_Jh}, so they scale as $O(n_r)$; at fixed
$n_r/n_t$ and fixed $b$, $n_r\propto n_t=q/b$, hence $O(n_r)=O(q)$.
Each normalized $\bar J_{ij},\bar h_i$ is $O_p(1)$ after
division by $q$ (the coefficient scale, not the total eigenphase, is
size-stable) so
one angle pack can be applied at different sizes:
\begin{equation}
 \bar C_q(z)=\sum_{i<j}\bar J_{ij}z_i z_j+\sum_i\bar h_i z_i;\,
 \bar J_{ij}=J_{ij}/q;\,\ \bar h_i=h_i/q
 \label{eq:normalized_coeffs}
\end{equation}
This only rescales QAOA phases,
$e^{-\ii\gamma\bar C_q}=e^{-\ii(\gamma/q)C}$ up to a global phase, and leaves
the minimizer unchanged.
Proposition~\ref{prop:moments} summarizes the structure. For realified channel column \(v_{u,P}\) with
$v_{u,R}=[H_R(:,u);H_I(:,u)]$ and $v_{u,I}=[-H_I(:,u);H_R(:,u)]$, define
\(g_{u,P}\triangleq v_{u,P}^{T}\widetilde y\), an entry of the matched filter
\(\widetilde H^{T}\widetilde y\) of \eqref{eq:M_QAM_Jh} that correlates the
received vector against the channel columns. Its distribution is
\(u\)-independent but differs between the \(R\) and \(I\) axes. Averaging
over the transmitted word cancels the signal part, so \(g_{u,P}\) is centered
and $\Lcal_P\triangleq \Var(g_{u,P})=\E[g_{u,P}^{2}]$ is its per-axis
variance.
All bits on the same \((u,P)\) share it, and
\(h_i=-2w_i g_{u_i,P_i}\); thus \(\Lcal_P\) is the common factor in
\(\Var(h_i)=4w_i^2\Lcal_{P_i}\) and the corresponding same-axis field
covariances.
For \(i,j,k\in\{1,\ldots,q\}\), define the uniform index means
\(\E_i f_i\triangleq q^{-1}\sum_i f_i\);
\(\E_{i<j}f_{ij}\triangleq \binom q2^{-1}\sum_{i<j}f_{ij}\);
\(\E_{i\ne j}f_{ij}\triangleq [q(q-1)]^{-1}\sum_{i\ne j}f_{ij}\); and
\(\E_{(i,j,k)}f_{ijk}\triangleq [q(q-1)(q-2)]^{-1}
\sum_{|\{i,j,k\}|=3}f_{ijk}\), the final sum being over ordered triples.

\begin{proposition}[Correlated moments of M-QAM]\label{prop:moments}
Let the physical channel $H$ have i.i.d.\ proper-complex Gaussian entries of
variance $\sigma_H^2$,  noise $n$ be proper-complex Gaussian and
independent of $H$, and  transmitted signs $s$ be uniform and
independent of both; $\widetilde H,\widetilde n$ are real parts
\eqref{eq:realification}, whose $R$/$I$ partner columns are therefore
orthogonally paired rather than independent. Then, the couplings have the
following mean:
\begin{equation}
\E[J_{ij}]=2n_r\sigma_H^2w_iw_j\ \ \text{for }u_i{=}u_j,\,P_i{=}P_j,\,r_i{\ne}r_j
\label{eq:J_mean_MQAM}
\end{equation}
(zero otherwise) and variance as follows:
\begin{equation}
\Var(J_{ij})=
\begin{cases}
4n_r\sigma_H^4 w_i^2w_j^2, & u_i{=}u_j,\;P_i{=}P_j,\;r_i{\ne}r_j,\\
0\ \text{(identically)}, & u_i{=}u_j,\;P_i{\ne}P_j,\\
2n_r\sigma_H^4 w_i^2w_j^2, & u_i{\ne}u_j
\end{cases}
\label{eq:J_var_MQAM}
\end{equation}
The fields have $\Var(h_i)=4w_i^2\Lcal_{P_i}$ for every $i$ and
$\Cov(h_i,h_j)=4w_iw_j\Lcal_{P_i}$ on the class of
\eqref{eq:J_mean_MQAM}, zero otherwise.
Conditioned on the transmitted word, we have the following:
\begin{equation}
 \E[h_i\mid s]=-2n_r\sigma_H^2w_i\,x_{u_i,P_i}; \, \,
  x_{u,P}=\sum_r w_{P,r}s_{u,P,r} 
 \label{eq:h_cond_mean}
\end{equation}
and, for inter-symbol edge $u_i\ne u_j$, the following is obtained:
\begin{equation}
\begin{aligned}
\Cov(J_{ij},h_i\mid s)&=-2n_r\sigma_H^4\,w_i^2w_j\,x_{u_j,P_j}
\end{aligned}
\label{eq:D_cond_MQAM}
\end{equation}
With \eqref{eq:normalized_coeffs} and the centered coefficients
$\bar J^c_{ij}=\bar J_{ij}-\E[\bar J_{ij}]$,
$\bar h_i^c=\bar h_i-\E[\bar h_i]$, the scalar exchangeable projection is as follows:
\begin{eqnarray}
4\sigma_J^2 \triangleq q\,\E_{i<j}\big[\E[(\bar J^c_{ij})^2]\big]
=\tfrac{1}{q}\,\E_{i<j}[\Var(J_{ij})] && \label{eq:scalar_variances}\\
\sigma_h^2 \triangleq \E_i\big[\E[(\bar h_i^c)^2]\big]
=\tfrac{1}{q^2}\,\E_i[\Var(h_i)] && \label{eq:scalar_sigmah}\\
A \triangleq q^2\,\E_{(i,j,k)}[\Cov(\bar J^c_{ij},\bar J^c_{ik})]
=\E_{(i,j,k)}[\Cov(J_{ij},J_{ik})] && \label{eq:scalar_A}\\
\Bsf \triangleq q\,\E_{i\ne j}[\Cov(\bar h_i^c,\bar h_j^c)]
=\tfrac{1}{q}\,\E_{i\ne j}[\Cov(h_i,h_j)] && \label{eq:scalar_B}\\
D \triangleq q\,\E_{i<j}[\Cov(\bar J^c_{ij},\bar h_i^c)]
=\tfrac{1}{q}\,\E_{i<j}[\Cov(J_{ij},h_i)]=0 && \label{eq:scalar_D}
\end{eqnarray}
The raw forms are closed in $n_r,n_t,\sigma_H^2,\sigma_n^2$ and QAM weights.
The symmetric sign average gives $\E[h_i]=\Cov(J_{ij},h_i)=0$ resulting in $D = 0$; conditioning on the word retains
\eqref{eq:D_cond_MQAM}.
\end{proposition}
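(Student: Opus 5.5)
The plan is to reduce every coefficient to inner products of realified channel columns and then apply elementary Gaussian moment identities. By \eqref{eq:M_QAM_Jh} and the structure of $T$ in \eqref{eq:T_matrix}, column $i$ of $\widetilde H T$ equals $w_i v_{u_i,P_i}$. Hence
\[
J_{ij}=2w_iw_j\,v_{u_i,P_i}^{T}v_{u_j,P_j}, \qquad h_i=-2w_i g_{u_i,P_i}.
\]
From the definitions of $v_{u,R}$ and $v_{u,I}$ in terms of $H_R$ and $H_I$, each $v_{u,P}$ has $2n_r$ i.i.d.\ $\mathcal N(0,\sigma_H^2/2)$ entries. The two partner columns of the same symbol satisfy $v_{u,R}^{T}v_{u,I}=0$ identically, and columns of different symbols are independent. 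The coupling statements then split into three cases.
\begin{itemize}
\item \emph{Same symbol and axis.} Here $J_{ij}=2w_iw_j\|v\|^2$, a scaled $\chi^2_{2n_r}$ variable. This gives mean $2n_r\sigma_H^2w_iw_j$ and variance $4w_i^2w_j^2\cdot 2n_r\cdot 2(\sigma_H^2/2)^2=4n_r\sigma_H^4w_i^2w_j^2$.
\item \emph{Same symbol, different axes.} The orthogonality of the partner columns gives $J_{ij}\equiv 0$.
\item \emph{Different symbols.} The product of two independent columns has zero mean and variance $4w_i^2w_j^2\cdot 2n_r(\sigma_H^2/2)^2=2n_r\sigma_H^4w_i^2w_j^2$.
\end{itemize}

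For the fields I expand
\[
g_{u,P}=v_{u,P}^{T}\widetilde H\widetilde Ts+v_{u,P}^{T}\widetilde n,
\qquad \widetilde H\widetilde Ts=\sum_{u',P'}x_{u',P'}v_{u',P'}.
\]
This uses the fact that the realified signal is $\widetilde HTs=\sum_{u',P'}x_{u',P'}v_{u',P'}$. Conditioning on $s$ and taking the expectation over $H$ and $\widetilde n$, only the self term survives, giving $\E[g_{u,P}\mid s]=2n_r(\sigma_H^2/2)x_{u,P}\cdot 2/2=n_r\sigma_H^2x_{u,P}$. That is \eqref{eq:h_cond_mean}. Since the signs are uniform, $\E[x_{u,P}]=0$, so the fields are centered. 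The variance $\Var(h_i)=4w_i^2\Lcal_{P_i}$ then holds by the definition of $\Lcal_P$. For $\Cov(h_i,h_j)$ I take three cases.
\begin{itemize}
\item \emph{Same symbol and axis.} $h_i$ and $h_j$ share $g_{u,P}$, which gives $4w_iw_j\Lcal_P$.
\item \emph{Same symbol, different axes.} I need $\E[g_{u,R}g_{u,I}]=0$. I will obtain it from the sign flip $s_{u,R}\mapsto -s_{u,R}$ together with the Gaussian rotational symmetry that maps $(v_{u,R},v_{u,I})\mapsto(v_{u,R},-v_{u,I})$, which is the conjugation symmetry of the proper-complex law.
\item \emph{Different symbols.} I will argue zero by flipping all signs of symbol $u_i$ jointly with $v_{u_i,\cdot}\mapsto -v_{u_i,\cdot}$. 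Under this map $g_{u_i}$ changes sign while $g_{u_j}$ is invariant, because $x_{u_i}v_{u_i}$ is unchanged.
\end{itemize}

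For \eqref{eq:D_cond_MQAM}, I condition on $s$ and write $a=v_{u_i,P_i}$ and $c=v_{u_j,P_j}$. Then $J_{ij}h_i=-4w_i^2w_j\,(a^{T}c)\,a^{T}(x_{u_j,P_j}c+\cdots)$. Since $\E[J_{ij}\mid s]=0$, the conditional covariance reduces to $-4w_i^2w_j\,\E[(a^{T}c)\,(a^{T}\widetilde y)]$. Among the terms of $\widetilde y$, only the $x_{u_j,P_j}c$ term pairs with $a^{T}c$; every other term contains an odd power of some independent Gaussian vector. The surviving term is $\E[(a^{T}c)^2]=2n_r\sigma_H^4/4$, so the covariance is $-2n_r\sigma_H^4w_i^2w_jx_{u_j,P_j}$.

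The scalar projections \eqref{eq:scalar_variances}--\eqref{eq:scalar_B} are the $q$-power rescalings induced by \eqref{eq:normalized_coeffs}. Averaging the closed forms over index classes gives expressions in $n_r,n_t,\sigma_H^2,\sigma_n^2$ and the weights. To make this explicit I will also compute $\Lcal_P$ by adding $\E[\|a\|^2(a^{T}\widetilde y)^2]$-type terms to the noise term $2n_r(\sigma_H^2/2)(\sigma_n^2/2)$. For $D=0$, the unconditional covariance is the $s$-average of the conditional one, plus $\Cov(\E[J\mid s],\E[h\mid s])$. The first vanishes because $\E[x]=0$. The second vanishes because $\E[J_{ij}\mid s]$ does not depend on $s$.

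The main obstacle is the careful bookkeeping of the $R/I$ partner structure in the field covariances. The columns $v_{u,R}$ and $v_{u,I}$ are dependent but orthogonal, so fourth-moment terms such as $\E[(v_{u,R}^{T}v_{u',P'})(v_{u,I}^{T}v_{u',P'})]$ must be checked to vanish. I will handle these with Isserlis' theorem, using the $\pi/2$ rotation relating the two partner columns, rather than case-by-case expansion.
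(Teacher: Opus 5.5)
Your route is the paper's own: write $J_{ij}=2w_iw_j\,v_{u_i,P_i}^{T}v_{u_j,P_j}$ and $h_i=-2w_ig_{u_i,P_i}$, then evaluate sign-averaged Gaussian moments. Every value you obtain is correct: the coupling means and three variance classes, $\E[h_i\mid s]$, $\Lcal_P$, and the conditional $J$--$h$ covariance. Your inter-symbol field argument (negate column $u_i$ of $H$ together with all bits of symbol $u_i$) is also valid.

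The gap is that the two steps where the $R/I$ partner structure actually matters rest on arguments that do not hold. For $\E[g_{u,R}g_{u,I}]=0$, the map $(v_{u,R},v_{u,I})\mapsto(v_{u,R},-v_{u,I})$ is not a transformation of the model. Indeed $v_{u,I}=\Omega v_{u,R}$, with $\Omega$ the skew quarter-turn $(a;b)\mapsto(-b;a)$, so $v_{u,I}$ is determined by $v_{u,R}$. Conjugation $H\mapsto\bar H$ actually sends $(v_{u,R},v_{u,I})\mapsto(Fv_{u,R},-Fv_{u,I})$ with $F=\diag(I_{n_r},-I_{n_r})$. Flipping the bits of $s_{u,R}$ negates neither $g_{u,R}$ nor $g_{u,I}$.

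A valid symmetry conjugates both $H$ and $n$ and flips \emph{every} imaginary-axis bit $s_{u',I,r}$. Then $\widetilde y\mapsto F\widetilde y$, so $g_{u,R}$ is fixed and $g_{u,I}$ changes sign; equivalently, $(H^{H}y)_u=g_{u,R}+\ii g_{u,I}$ is conjugated. This also covers $b_R\neq b_I$. Alternatively, compute directly. After sign averaging only diagonal pairings survive, and the self term carries $\E[x_{u,R}x_{u,I}]=0$. Every other term is a multiple of $\E[v_{u,R}^{T}v_{u,I}]=0$, because $\E[v_{u',P'}v_{u',P'}^{T}]=\tfrac{\sigma_H^2}{2}I$ for $u'\neq u$ and $\E[\widetilde n\widetilde n^{T}]=\tfrac{\sigma_n^2}{2}I$.

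In the conditional covariance \eqref{eq:D_cond_MQAM}, you claim every term other than $x_{u_j,P_j}c$ contains an odd power of an independent Gaussian. This fails for the partner term $x_{u_j,\bar P_j}c'$, where $c'=\pm\Omega c$ and $\bar P_j$ is the other axis: $(a^{T}c)(a^{T}c')$ is quadratic in column $u_j$ of $H$. The term does vanish, but for a different reason: $\E_c[(a^{T}c)(a^{T}c')]=\pm\tfrac{\sigma_H^2}{2}a^{T}\Omega^{T}a=0$ by skew-symmetry of $\Omega$. This is exactly the partner bookkeeping you flagged as the main obstacle, and it needs to be done at this step rather than passed over by parity.

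Finally, your $D=0$ step uses the conditional formula only on inter-symbol edges. On same-symbol, same-axis edges, $\Cov(J_{ij},h_i\mid s)=-4n_r\sigma_H^4w_i^2w_j\,x_{u_i,P_i}$, which is again linear in the word and so averages to zero. The cleanest route is the paper's ``symmetric sign average'': the law-preserving map $(s,\widetilde n)\mapsto(-s,-\widetilde n)$ fixes $J$ and negates $h$. That gives $\E[h_i]=\Cov(J_{ij},h_i)=0$ for every edge class at once.
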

\begin{proof}[Proof sketch]
With realified columns $a_i=w_iv_{u_i,P_i}$, every entry follows from
sign-averaged Gaussian moment factorization over the i.i.d.\ columns and
shared-index counting:
$\E\|v\|^2=n_r\sigma_H^2$,
$\E\|v\|^4=n_r(n_r{+}1)\sigma_H^4$,
$\E(v^Tv')^2=n_r\sigma_H^4/2$, and
$\E(v^T\widetilde n)^2=n_r\sigma_H^2\sigma_n^2/2$ give the variances and the
closed $\Lcal_P=n_r\sigma_H^4[(n_r{+}1)\Ecal_P+(n_t{-}1)E_s/2]
+n_r\sigma_H^2\sigma_n^2/2$; only same-symbol/same-axis classes carry means;
pairs sharing exactly one index produce the $A,\Bsf,D$ classes at the printed
$q$-scalings; and centering commutes with the $q^{-1}$ limit.  Write $S_{m,P}\triangleq\sum_{r=1}^{b_P}w_{P,r}^m$,
$S_{2,P}=\Ecal_P$, $E_s=S_{2,R}+S_{2,I}$, and
$\Delta_P\triangleq S_{1,P}^2-S_{2,P}$. With $n_r/n_t\to\rho$,
$\sigma_n^2/n_t\to\nu_{\rm n}$, and
$\ell_P\triangleq\lim\Lcal_P/n_t^2
=\rho\sigma_H^4(\rho S_{2,P}{+}E_s/2)
+\rho\sigma_H^2\nu_{\rm n}/2$. Summing over $P\in\{R,I\}$ in
\eqref{eq:scalar_variances}--\eqref{eq:scalar_D} gives
\begin{equation}
\begin{alignedat}{2}
\sigma_J^2&=\frac{\rho \, \sigma_H^4E_s^2}{2b^3};\,&
\sigma_h^2&=\frac4{b^3}\sum_P S_{2,P}\ell_P;\\[-2pt]
A&=\frac{2 \, \rho \, \sigma_H^4E_s}{b^3}\sum_P\Delta_P;\,&
\Bsf&=\frac4{b^3}\sum_P\Delta_P\ell_P;\,D=0.
\end{alignedat}
\label{eq:scalar_moment_limits}
\end{equation}
Here $\Delta_P=S_{1,P}^2-S_{2,P}=2\sum_{r<t}w_{P,r}w_{P,t}\ge0$, with
equality iff $b_P\le1$; hence $A,\Bsf\ge0$, and $A=\Bsf=0$
exactly at BPSK/QPSK: the independent-disorder theory
\cite{gulbahar2024mimo} is the degenerate corner of the present one within
this retained family; identities such as
$J_{(u,R,r),(v,R,t)}{=}J_{(u,I,r),(v,I,t)}$ tie disjoint pairs at every
constellation and lie outside, one instance of what the exchangeable
projection of Sec.~\ref{sec:infsize} drops.
\end{proof}

\section{Normalization and the Comparison Metric}
\label{sec:normalization}
 
The diagonal and \(C_0\) give only
\(C_{const}=C_0+\operatorname{Tr}G\) since \(z_i^2=1\), so
\eqref{eq:normalized_coeffs} has raw-scale form
$q \, \bar C_q(z)=\sum_{i\ne j}G_{ij}z_iz_j+h^\top z$.
Proportional intra-symbol columns make the raw cost \(O(q^2)\), so
\(\bar C_q\) grows linearly in \(q\) and \(\langle\bar C_q\rangle/q\) is the
size-stable density (Fig.~\ref{fig:largen}).  $U_C$ denotes
the phase separator built from \(\bar C_q\), the QAOA unitary that applies
the cost as a phase:
\begin{equation}
	U_C(\gamma)=\exp\!\big(-\ii\,\gamma\,\bar C_q\big),\qquad \gamma=O(1)
	\label{eq:phase_q}
\end{equation}

\subsection{An offset-free figure of merit}

Denote by \(\E_z\triangleq \E_{z\sim{\rm Unif}(\{\pm1\}^q)}\) the uniform
average over words and by $C_{\rm ML}\triangleq\bar C_q(\hat z)$ the value at
the ML word $\hat z$ minimizing \eqref{eq:M_QAM_binary_ML}. The uniform reference then vanishes:
\begin{align}
 C_{\rm rand}&=\E_z[\bar C_q(z)]\nonumber\\
 &=q^{-1}\textstyle\sum_{i\ne j}G_{ij}\,
 \E_z[z_iz_j]+q^{-1}h^\top\E_z[z]=0
 \label{eq:crand_zero}
\end{align}
\begin{equation}
 R_p=\frac{\langle\bar C_q\rangle-C_{\rm ML}}{C_{\rm rand}-C_{\rm ML}}
 =1-\frac{\langle\bar C_q\rangle}{C_{\rm ML}}
 \label{eq:Rp}
\end{equation}
Therefore, $R_p=0$ at ML and $R_p=1$ at random, the second form needing
$C_{\rm ML}\neq0$, which holds since $C_{\rm ML}\le C_{\rm rand}=0$ with
equality only for constant cost; only states no worse than random
lie in $[0,1]$, and averages use per-instance ratios. The transmitted-word $s$
value $C_{\rm tx}\triangleq \bar C_q(s)$ serves as the second energy
reference reported in Figs.~\ref{fig:objective} and~\ref{fig:largen}.

\section{Correlation-Aware Energy Surrogate and Sampling Law} \label{sec:infsize}
This section defines the four design objectives, constructs analytic
energy surrogate \(O_E\) and the general-depth sampling law for
\(O_s\), including the fixed-\(p\) determinant model and efficient \(p=1\)
form, and states the offline search protocol and the shot budget relating sampled BER to ML mass.
\subsection{Objectives and Offline Angle Design}\label{sec:design_objectives}
The energetic route maps the
physical moments  to $O_E$ and its finite
approximations $\Rea,\Reb$ while the sampling one combines the exact
fixed-$p$ determinant law, its efficient $p=1$ specialization, and the
general-depth objective $O_s$. Both use the continuation protocol of Section~\ref{sec:search_protocol}, with
exact ensemble evaluation practical only at $p=1$. For $\theta=(\gamma,\beta)\in\mathbb R^{2p}$, write
\(\widehat\E_{\mathcal I}f\triangleq|\mathcal I|^{-1}\sum_{I\in\mathcal I}f(I)\)
for the empirical average over a frozen $N_{\rm tr}$-instance reference
ensemble $\mathcal I$ with $q_{\rm ref}=n_t^{\rm ref}(b_R+b_I)$, and let
\(P^{\rm ML}_{q_{\rm ref},p}(I;\theta)\triangleq|\langle\hat z_I|\gamma,\beta\rangle|^2\)
be the ML mass of instance \(I\)'s depth-\(p\) QAOA state. The design rate is the per-qubit exponent of this mass: under the
large-$q$ scaling assumption $P^{\rm ML}=e^{-q\varrho(\theta)+o(q)}$,
the logarithm makes the objective
intensive and   equal to the per-qubit log shot count of $O(1/P^{\rm ML})$ sampling
\cite{boulebnane2024sat}:
\begin{equation}
 \rs(\theta;\mathcal I)=-\frac{1}{q_{\rm ref}|\mathcal I|}
 \sum_{I\in\mathcal I}\log P^{\rm ML}_{q_{\rm ref},p}(I;\theta) 
 \label{eq:rs_def}
\end{equation}
written \(\rs(\theta)\) when \(\mathcal I\) is fixed. The design objective
minimized by route $\mathrm O$ is then as follows:
\begin{equation}
 \mathcal J_{\mathrm O}(\theta)=
 \begin{cases}
 \Re\,V_p^{\rm corr}(\theta) & O_E,\\[2pt]
 \widehat{\E}_{\mathcal I}[\langle\bar C_{q_{\rm ref}}\rangle_\theta]
   & \Rea,\Reb,\\[2pt]
 \rs(\theta;\mathcal I) & O_s
 \end{cases}
 \label{eq:objective_definitions}
\end{equation}
Only the last two rows use $\mathcal I$: $V_p^{\rm corr}$ is the
infinite-size evaluator and needs no instances.
The real part $\Re$ in \eqref{eq:objective_definitions} is definitional:
reality of $V_p^{\rm corr}$ on the selected branch is not established,
the underlying stationary point is complex, and branch uniqueness is
not established; the continuation fixes the branch, and $\Re$ makes
the selected value real.
\subsection{Evaluators for \(\Rea\) and \(\Reb\), and Offline Angle
	Search}\label{sec:search_protocol}
$\Rea$ and $\Reb$ evaluate the same frozen-reference energy and differ only
in how the angles are searched: $\Rea$ treats all $2 \,p$ angles as free
variables at every depth, whereas $\Reb$ does so only while $p$ is small and
then restricts the search to the four amplitudes and exponents of the
schedule family below, so its search dimension stays fixed as $p$ grows.
Training and reported validation use disjoint frozen ensembles. Depth
$\ell$ angles seed depth $\ell+1$ by last-layer repetition $\mathsf{Rep}$ or
midpoint interpolation $\mathsf{Int}$:
\(\mathsf{Rep}_{\ell\to\ell+1}(x^{(\ell)})
=(x_1^{(\ell)},\ldots,x_\ell^{(\ell)},x_\ell^{(\ell)})\),
\([\mathsf{Int}_{\ell\to\ell+1}(x^{(\ell)})]_k
=\widetilde x_\ell\big((k-\tfrac12)/(\ell+1)\big)\),
\(k=1,\ldots,\ell+1\), for $x^{(\ell)}\in\{\gamma^{(\ell)},\beta^{(\ell)}\}$,
with $\widetilde x_\ell$ linearly interpolating the layer midpoints
$(j-\tfrac12)/\ell$, held constant beyond the first and last of them, and
constant at $\ell=1$. Every depth tries both, plus
fixed ramps, and keeps a refinement only on strict objective decrease. At
deep levels $\Reb$ projects the interpolant onto the four-parameter family
$\gamma_j=g_Ax_j^{g_K}$, $\beta_j=-b_A(1-x_j)^{b_K}$ at midpoints
$x_j=(j-\tfrac12)/\ell$, searching only $g_A,b_A,g_K,b_K$. Reference sizes
stay within exact-statevector reach; angles deploy at target depth and size.
The same  rules drive the $O_E$ and $O_s$
searches of Sections~\ref{sec:r1obj} and~\ref{sec:r5obj}.

\subsection{The \(O_E\) Objective and Correlated Energy Surrogate}
\label{sec:r1obj}
Let \(\gamma=(\gamma_1,\ldots,\gamma_p)\) and
\(\beta=(\beta_1,\ldots,\beta_p)\). For fixed \(J,h\),
\(\langle\cdot\rangle\) is the  state expectation of
Section~\ref{sec:qaoa_sk}, while \(\E_{J,h}\) averages the law induced by
\(H,\widetilde n,s\), as in BGMZ~\cite[Eq.~(3.1)]{bgmz2022constant} and
\cite[Eq.~(15)]{gulbahar2024mimo}. The physical energy density is as follows:
\begin{equation}
 V_p(\gamma,\beta)\triangleq \lim_{q\to\infty}
 \E_{J,h}\big[\langle\bar C_q\rangle/q\big]
 \label{eq:Vp_def}
\end{equation}
(Existence of this limit for the correlated M-QAM ensemble is not
asserted.) BGMZ calculate \eqref{eq:Vp_def} for independent disorder (their
Assumption~1). Replacing their covariance exponent by
$(\sigma_J^2,\sigma_h^2,A,\Bsf)$ of
\eqref{eq:scalar_variances}--\eqref{eq:scalar_B} defines a
covariance-matched, source-free, post-gauge Gaussian action surrogate,
evaluated at its selected stationary value, rather than a
controlled approximation to \eqref{eq:Vp_def}.
The construction (i) Gaussianizes the
Wishart Gram and drops its higher cumulants; (ii) drops deterministic means
and the planted reference; (iii) removes the middle-word marks \(m_i\) (Appendix~\ref{app:action}),
justified for BGMZ's sign-symmetric i.i.d.\ disorder but not here; and
(iv) projects the species, one per bit weight and axis, onto the
exchangeable averages \eqref{eq:scalar_variances}--\eqref{eq:scalar_B}. Equality
with \eqref{eq:Vp_def} is not asserted. The following lemma computes
\(V_p^{\rm corr}\), the model's stand-in for \(V_p\):
\(\Fcal^{(0)}_{A,\Bsf,0}\) denotes the free energy of the Gaussian model
(its stationary action value, \eqref{eq:free_energy}); source
differentiation gives its mean energy: couple an auxiliary \(t\) to the
energy and differentiate at \(t=0\); \(-\ii\) fixes the phase convention.
\begin{lemma}[Correlated surrogate readout of $V_p$]\label{lem:readout}
With paths \eqref{eq:path_weights},
contractions \eqref{eq:corr_contractions}, covariance exponents
\eqref{eq:ABD_action}--\eqref{eq:P0}, action \eqref{eq:free_energy},
and branch \(W^\star\) from
\eqref{eq:fixed_point}--\eqref{eq:corr_sce}, assumed to exist as a
differentiable continuation $W^\star(t)$ near $t=0$ with the denominator of
\eqref{eq:fixed_point} nonzero there, the covariance-matched readout is as follows with  all sums over \(a,b\in\Acal_p =\{\pm1\}^{2p}\):
\begin{align}
	&V_p^{\rm corr}(\gamma,\beta)
	=-\ii\,\partial_t\,\Fcal^{(0)}_{A,\Bsf,0}(t)\big|_{t=0}
	\nonumber\\
	&=-\ii\Big[(2\sigma_J^2 +A)\textstyle\sum_aW^\star_aK_a(W^\star)
	+(\sigma_h^2+\Bsf)\bar\Phi(W^\star)\Big]
	\label{eq:Vpcorr}
\end{align}
\end{lemma}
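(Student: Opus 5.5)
The plan is to follow the BGMZ source-differentiation route, with the i.i.d.\ covariance exponent replaced by the covariance-matched one. Everything is done at the level of the surrogate action $\Fcal^{(0)}_{A,\Bsf,0}(t)$ in \eqref{eq:free_energy}, so no limit in \eqref{eq:Vp_def} is needed.

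The first step is to write the generating functional. I insert a source $t$ into the phase separator, replacing $\bar C_q$ by $(1+\text{source})\bar C_q$ on one extra path slot, so that $-\ii\partial_t$ at $t=0$ brings down one copy of $\langle\bar C_q\rangle/q$. I then expand the $4^p$-path sum \eqref{eq:path_weights} with path weights carrying the mixer amplitudes. Averaging the Gaussian surrogate disorder gives a quadratic exponent in the empirical path measure $W$. The coupling part splits into two terms:
\begin{itemize}
\item the diagonal pair variance $2\sigma_J^2$ (the factor $2$ from $i<j$ versus ordered pairs), and
\item the shared-index covariance $A$, which, after summing over the third index of ordered triples, contributes at the same order $O(q)$ and through the same contraction $K_a(W)$ of \eqref{eq:corr_contractions}.
\end{itemize}
The field part similarly yields $\sigma_h^2+\Bsf$ multiplying the one-body bilinear $\bar\Phi(W)$. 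Because $D=0$ by Proposition~\ref{prop:moments}, no mixed $J$--$h$ term appears; this is why the subscript $0$ appears and why the exponent matches \eqref{eq:ABD_action}--\eqref{eq:P0}.

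The second step is the derivative of the stationary value. By assumption the branch $W^\star(t)$ solving \eqref{eq:fixed_point}--\eqref{eq:corr_sce} exists and is differentiable near $t=0$, with nonvanishing denominator. The free energy is therefore $\Fcal(t)=S(W^\star(t),t)$ for the action $S$. By the chain rule, $\partial_t\Fcal=\partial_WS\cdot\dot W^\star+\partial_tS$. The first term vanishes by stationarity, which is the envelope theorem applied to the complex action; stationarity is exactly \eqref{eq:corr_sce}. Hence only the explicit $t$-dependence survives.

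The third step computes $\partial_t S|_{t=0}$. The source enters only the energy insertion, so differentiating the quadratic exponent once replaces one factor of the path phase by the cost itself. Using the normalization of $W^\star$ (total path mass one at $t=0$) and the symmetry $a\leftrightarrow b$ of the contraction kernel, the two-body term collapses to $(2\sigma_J^2+A)\sum_aW^\star_aK_a(W^\star)$ and the field term to $(\sigma_h^2+\Bsf)\bar\Phi(W^\star)$. Multiplying by $-\ii$ gives \eqref{eq:Vpcorr}.

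I expect the main obstacle to be the bookkeeping that the $A$ and $\Bsf$ classes enter with coefficient exactly one relative to $2\sigma_J^2$ and $\sigma_h^2$. This requires checking that the $q^2$ and $q$ prefactors in \eqref{eq:scalar_A}--\eqref{eq:scalar_B}, combined with the count $q(q-1)(q-2)$ of ordered triples, produce the same $O(q)$ scaling of the exponent. It also requires that the $t$-derivative of the cross terms gives the same contraction $K_a$ rather than a different bilinear. I would settle this first by a direct count at $p=1$, then argue it path-by-path.

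A secondary subtlety is that the stationary point is complex, so the envelope argument must be stated for holomorphic dependence on $W$. The hypotheses of the lemma cover exactly this.
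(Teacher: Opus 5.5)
Your proposal is correct and follows essentially the paper's route: a source that shifts every phase uniformly, an envelope step, and explicit differentiation of $\Pcal_0+\Upsilon_A+\Upsilon_{\Bsf}$. The paper simply defines the source formally as $\Phi_a\mapsto\Phi_a-t$, $\Phi_{a\circ b}\mapsto\Phi_{a\circ b}-t$ in \eqref{eq:free_energy}, so $K_a\mapsto K_a-t$ and $\bar\Phi\mapsto\bar\Phi-t$ by $\sum_aW_a=1$. In the envelope step, the $\dot W^\star$ term vanishes because the constrained gradient is a common Lagrange multiplier and $\sum_a\dot W^\star_a=0$, a point you should state explicitly rather than invoking bare stationarity.

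Since the lemma takes \eqref{eq:ABD_action}--\eqref{eq:P0} as given, your Step 1 re-derivation of the action is not needed. Nor is the coefficient bookkeeping you flag as the main obstacle: the unit relative coefficients of $A$ and $\Bsf$ come straight from differentiating $-\frac A2\sum_aW_aK_a^2$ and $-\frac{\Bsf}{2}\bar\Phi^2$. Your ``$(1+\text{source})\bar C_q$'' wording should also be an additive insertion at the middle word, which becomes this uniform shift once $m_i=1$.
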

\begin{proof}
Appendix~\ref{app:vp_proofs}.
\end{proof}

Algorithm~\ref{alg:corrvp} evaluates Lemma~\ref{lem:readout} by continuation
and matrix-free Newton--GMRES (generalized minimal
residual)~\cite{saad1986gmres}. Its Newton--Krylov (NK) bundle is
\(\Pi_{\rm NK}=(\mathcal T,\epsilon_F,\epsilon_K,N_{\rm it},\Pi_K,\Lambda_{\rm d})\):
the continuation grid \(\mathcal T\subset[0,1]\), the residual and Krylov
tolerances \(\epsilon_F,\epsilon_K\), the Newton cap \(N_{\rm it}\), the GMRES restart controls \(\Pi_K\), and the
descending damping candidates \(\Lambda_{\rm d}\). Continuation switches on the
independent and correlated couplings in turn as follows:
\((\sigma_J^2,\sigma_h^2,A,\Bsf)\mapsto(u \, \sigma_J^2,u \, \sigma_h^2,v \, A,v \, \Bsf)\),
carrying the self-consistency equation (SCE) \eqref{eq:corr_sce} to the
reparameterized map \(\mathsf T_{u,v}\) on
path weights \(W\) (\(\sum_aW_a=1\)) and its residual
\(\mathsf F_{u,v}(W)\triangleq W-\mathsf T_{u,v}(W)\), whose
Jacobian is applied matrix-free as \(\mathsf F'_{u,v}\); the path
\((u,v)=(\zeta,0)\) then \((1,\zeta)\) ends at \((1,1)\), where
\(\mathsf F_{1,1}\) is the correlated equation itself. Steps~1-3 build their
objects from Proposition~\ref{prop:moments} and Appendix~\ref{app:action}: moments
\eqref{eq:scalar_moment_limits}, paths
\eqref{eq:path_weights},  \(K,\bar\Phi\)
\eqref{eq:corr_contractions}, and the Jacobian--vector product
\eqref{eq:jvp}.
\begin{algorithm}[t!]
\footnotesize
\caption{Energy objective $O_E$:  
$V_p^{\rm corr}(\gamma,\beta)$}
\label{alg:corrvp}
\begin{algorithmic}[1]
\Require $p,\gamma[1{:}p],\beta[1{:}p]$;
$\Xi_E=(b_R,b_I,\rho,\sigma_H^2,\mathrm{SNR}_{\rm dB},\mathrm{mode})$;
$\Pi_{\rm NK}=(\mathcal T,\epsilon_F,\epsilon_K,N_{\rm it},\Pi_K,\Lambda_{\rm d})$
\Ensure $\big(V_p^{\rm corr}(\gamma,\beta),
\|\mathsf F_{1,1}(W^\star)\|_\infty\big)$ 
\State Set $b\gets b_R+b_I$, $w_{P,r}\gets2^{b_P-r}$, and
$E_s\gets\sum_{P,r}w_{P,r}^2$
\State Set $\mathrm{SNR}_{\rm lin}\gets10^{\mathrm{SNR}_{\rm dB}/10}$;
set $\nu_{\rm n}\gets0$ if $\mathrm{mode}=\mathrm{ps}$, else
$\nu_{\rm n}\gets\sigma_H^2E_s/\mathrm{SNR}_{\rm lin}$
(the limit $\lim\sigma_n^2/n_t$ in each mode)
\State Obtain $(\sigma_J^2,\sigma_h^2,A,\Bsf,D{=}0)$ by
\eqref{eq:scalar_moment_limits}
\State Build $\Acal_p,(\Phi_a,Q_\beta(a))$ by
\eqref{eq:path_weights}, normalize $Q_\beta$ so $\sum_aQ_\beta(a)=1$, and encode
$a\circ b$ as XOR; implement $K,\bar\Phi,\mathsf F'_{u,v}$ by FWHT
\State Define normalized $\mathsf T_{u,v}$ from \eqref{eq:corr_sce} under
$(\sigma_J^2,\sigma_h^2,A,\Bsf)\mapsto
(u\sigma_J^2,u\sigma_h^2,vA,v\Bsf)$; set
$\mathsf F_{u,v}(W)\gets W-\mathsf T_{u,v}(W)$, $W\gets Q_\beta$
\ForAll{$(u,v)=(\zeta,0)$ then $(1,\zeta)$,
$\zeta\in\mathcal T\setminus\{0\}$ in increasing order}
  \For{$j\le N_{\rm it}$ while $\|\mathsf F_{u,v}(W)\|_\infty>\epsilon_F$}
    \State Solve $\delta\gets\mathsf{GMRES}(\mathsf F'_{u,v}(W),r;\Pi_K,
    \epsilon_K)$ with $r\gets\mathsf F_{u,v}(W)$
\State Update $W\gets W-\lambda\delta$, first $\lambda\in\Lambda_{\rm d}$
    strictly decreasing $\|\mathsf F_{u,v}\|_\infty$
  \EndFor
  \State Record $W_{u,v}\gets W$ once
  $\|\mathsf F_{u,v}(W)\|_\infty\le\epsilon_F$
\EndFor
\State Set $W^\star\gets W$, $K_a^\star\gets K_a(W^\star)$,
$\bar\Phi^\star\gets\bar\Phi(W^\star)$
\State \Return $\big(-\ii[(2\sigma_J^2+A)\sum_aW_a^\star
K_a^\star+(\sigma_h^2+\Bsf)\bar\Phi^\star],\;
\|\mathsf F_{1,1}(W^\star)\|_\infty\big)$
\end{algorithmic}
\end{algorithm}
Each $a\in\Acal_p$ is a one-qubit ket-bra path; suffixes recover slice
signs and $a\circ b$ is XOR. The contractions are as follows:
\begin{equation}
\begin{aligned}
K_a(W)&\triangleq \sum_{b\in\Acal_p}W_b\Phi_{a\circ b},&
\bar\Phi(W)&\triangleq \sum_{a\in\Acal_p}W_a\Phi_a 
\end{aligned}
\label{eq:corr_contractions}
\end{equation}
The continuation $Q_\beta=W^{\rm ind}(0)\to W^{\rm ind}(1)=W^{\rm corr}(0)\to W^\star$ activates $(\sigma_J^2,\sigma_h^2)$ before $(A,\Bsf)$; $u,v$ are
homotopy coordinates (superscripts label the two legs) and uniqueness is
not asserted. $\Pi_{\rm NK}$ archives
solver choices and residuals, while \eqref{eq:jvp} gives matrix-free
Jacobian--vector products. The formal source and envelope theorem yield
\eqref{eq:Vpcorr}; with $c_1^2=\sigma_h^2$, $c_2^2=2\sigma_J^2$,
$g_1(\lambda)=g_2(\lambda)=-\lambda^2/2$, and the ket/bra branches
swapped, at $A=\Bsf=0$ it reduces to
BGMZ~\cite[Eq.~(3.10)]{bgmz2022constant}, otherwise the stated
covariance-matched surrogate.  

Table~\ref{tab:complexity} separates offline and deployment costs, with
\(N_{\rm K}\) Krylov iterations per Newton step, \(N_{\rm it}\) Newton
steps per continuation node, \(C_{\mathcal T}=2(|\mathcal T|-1)\) nodes per
objective evaluation, and \(N_{\rm fev}\) evaluations per angle search.
FWHT evaluation of \eqref{eq:corr_sce} and \eqref{eq:jvp} costs $O(p \, 4^p)$
time/memory and Newton costs $O(N_{\rm K} \,p \, 4^p)$ without a stored Jacobian;
direct sums cost $O(4^{2  p})$, which bounds the analytic depth. 

\begin{table}[t]
	\centering
	\caption{Deployment costs for objective \(O_E\) }
	\label{tab:complexity}
	\begingroup
	\footnotesize
	\setlength{\tabcolsep}{0.8pt}
	\renewcommand{\arraystretch}{1.05}
	\begin{tabularx}{\columnwidth}{|>{\raggedright\arraybackslash}p{0.168\columnwidth}|X|
			>{\raggedright\arraybackslash}p{0.212\columnwidth}|X|}
		\hline
		\multicolumn{2}{|c|}{\emph{Offline design (classical)}} &
		\multicolumn{2}{c|}{\emph{Deployment (quantum), per shot}}\\
		\hline
		Stage & Cost & Stage & Cost\\
		\hline
		Paths/phases & \(O(p\, 4^p)\) setup &
		Total gates & \(O(p \, q^2)\)\\
		\hline
		SCE map & \(O(p \,4^p)\) &
		Circuit depth & \(O(p \, q)\) layers\\
		\hline
		Newton step & \(O(N_{\rm K}p \, 4^p)\) &
		Sampling & \(q\) readouts; \(S \, q\) bits\\
		\hline
		Angle search &
		\(O(N_{\rm fev}C_{\mathcal T}\)\newline\(N_{\rm it}N_{\rm K}p \,4^p)\) &
		Classical pre/post & \(J,h\): \(O(n_r q^2)\);\newline
		\(S\)-best: \(O(Sq^2)\)\\
		\hline
	\end{tabularx}
	\endgroup
\end{table}
\emph{Auxiliary finite-\(q\) oracles.} Closed-form \(p=1\), a
column-conditioned cavity estimate (leave one channel column out), and
subset-truncated \(p=2\) (retaining expansion subsets \(\mathcal N\),
\(|\mathcal N|\le1\)) validate the implementation without redefining
\(V_p\); all three are implemented in the reproducibility
capsule~\cite{repository}, and only the \(p=1\) circuit identities are exact.

\subsection{The \(O_s\) Objective and General-Depth Sampling Law}
\label{sec:r5obj}

Algorithm~\ref{alg:corrvp} targets the energy surrogate \eqref{eq:Vpcorr},
whereas best-of-\(S\) decoding depends on ML mass. Reported \(O_s\) packs
therefore minimize frozen-reference \(\rs\) at every prescribed
depth. In parallel, \eqref{eq:planted_pp} gives an exact planted ensemble law
for  fixed \(p\), an independent check on the sampling engine rather than
a design objective, whose annealed rate, the log of the mean mass (quenched
and annealed averages in the sense of~\cite{tanaka2002}), is as follows:
\begin{equation}
	\rpl(p)\triangleq-q^{-1}\log\E P^{\rm pl}_{q,p}
	\label{eq:rpl_def}
\end{equation}
A related success-probability exponent appears in~\cite{boulebnane2024sat}.
Algorithm~\ref{alg:sampling} evaluates \eqref{eq:rs_def} and
\eqref{eq:rpl_def} on separate branches: the reference-rate branch prepares
each frozen instance's depth-\(p\) state exactly and averages
\(\log P^{\rm ML}\), while the exact-ensemble branch convolves the
single-symbol table.  
The depth-one formulas are presented first only because their single-symbol
table is four-dimensional there and can be convolved in practice.  Define
\(\gamma_\ell^{\rm eff}\triangleq \gamma_\ell/q\) from \eqref{eq:phase_q}.

\subsubsection{Depth-one construction}
For the per-shot probability
\(P^{\rm pl}_{q,p}=|\langle s|\gamma,\beta\rangle|^2\) of measuring   \(s\), the site-wise gauge \(\tau_i=z_is_i\) factorizes the depth-one mixer element
\(m(z)=\langle z|U_B(\beta)|s\rangle\), \(U_B(\beta)=e^{-\ii\beta B}\) with
\(B=\sum_iX_i\), as follows:
\begin{equation}
 m(z)=\prod_{i=1}^{q}g(\tau_i); \,
 g(+1)=\cos\beta; \,  g(-1)=-\ii\sin\beta
 \label{eq:planted_mix}
\end{equation}
and let the symbol-domain errors
\(\varepsilon=T(s-z),\varepsilon'=T(s-z')\in\mathbb R^{2n_t}\) carry the ket
and bra words against \(s\) through the bit-to-symbol map \(T\)
of~\eqref{eq:T_matrix}; these are the depth-one case of
\(\varepsilon^{(\ell)}\) below.
Identify their R/I
halves as
\(\varepsilon^{\rm c}=\varepsilon_R+\ii\varepsilon_I\in\mathbb C^{n_t}\).
For \(\langle x,y\rangle\triangleq
\sum_ux^{\rm c}_u\overline{y^{\rm c}_u}\), define the error Gram
\(\Sigma\) and   phase-noise kernel \(D_1\) (\(G_p\) and \(D_p\) defined below) as follows:
\begin{equation}
 \begin{aligned}
 \Sigma = G_1 &=\begin{pmatrix}|\varepsilon^{\rm c}|^2 &
 \langle\varepsilon,\varepsilon'\rangle\\
 \overline{\langle\varepsilon,\varepsilon'\rangle} &
 |\varepsilon'^{\rm c}|^2\end{pmatrix} \\[-1pt]
 D_1&=\ii\gamma_1^{\rm eff}\begin{pmatrix}1&0\\0&-1\end{pmatrix}
 +\sigma_n^2(\gamma_1^{\rm eff})^2
 \begin{pmatrix}1&-1\\-1&1\end{pmatrix} 
 \end{aligned}
\label{eq:planted_gram}
\end{equation}
Equations~\eqref{eq:planted_gram}--\eqref{eq:planted_pp} set
\(\sigma_H^2=1\); for general channel variance replace \(\Sigma\) (or
\(G_p\)) by \(\sigma_H^2\Sigma\) (or \(\sigma_H^2G_p\)), as
Algorithm~\ref{alg:sampling} does. Here \(G_p,D_p,\Sigma\) are sampling-side error-Gram and kernel objects,
distinct from the detector \(G\), the scalar \(D\), and \(\Sigma_q\).

\begin{proposition}[exact depth-one specialization]\label{prop:planted_p1}
With \eqref{eq:planted_mix}--\eqref{eq:planted_gram} and per-symbol bit
weights \(w_{P,r}=2^{b_P-r}\),
\begin{equation}
 \E P^{\rm pl}_{q,1}=2^{-2q}\!\!\sum_{s,z,z'}\!m(z)m(z')^{*}
 \big[\det(I_2+\Sigma D_1)\big]^{-n_r} 
 \label{eq:planted_p1}
\end{equation}
while the determinant kernel depends on \((s,z,z')\) only through  
\begin{equation}
 \iota=(d,d',u,v)=\big(|\varepsilon^{\rm c}|^2,\ |\varepsilon'^{\rm c}|^2,\
 \Re\langle\varepsilon,\varepsilon'\rangle,\
 \Im\langle\varepsilon,\varepsilon'\rangle\big)
 \label{eq:planted_inv}
\end{equation}
each component of which is a sum of per-symbol contributions. Mixer weights depend on the
full bit pattern, so they are collision-aggregated within each invariant cell
before the common kernel is applied. Then, \eqref{eq:planted_p1} equals the \(n_t\)-fold convolution over
\(\iota\) of the per-symbol weights \eqref{eq:planted_table}, without
enumerating \(2^{q}\) configurations.
\end{proposition}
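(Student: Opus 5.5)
We prove Proposition~\ref{prop:planted_p1} by expanding the depth-one amplitude in the computational basis, averaging the channel and noise exactly as a Gaussian integral, and then reorganizing the word sums by symbol.

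\emph{Amplitude and probability.} Write
\(\langle s|\gamma,\beta\rangle=\sum_z\langle s|U_B(\beta)|z\rangle\langle z|U_C(\gamma_1)|+\rangle^{\otimes q}\)
\(=2^{-q/2}\sum_z m(z)e^{-\ii\gamma_1^{\rm eff}C(z)}\).
Here we use the symmetry of \(U_B\), \eqref{eq:planted_mix}, and the fact that \(U_C\) is diagonal with phase \(\gamma_1\bar C_q=\gamma_1^{\rm eff}C\) from \eqref{eq:phase_q}. The constant \(C_{const}\) is a global phase, so \(C(z)\) can be replaced by \(\|\widetilde y-\widetilde HTz\|^2\). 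Squaring gives
\[
P^{\rm pl}=2^{-q}\sum_{z,z'}m(z)m(z')^*\exp\big(-\ii\gamma_1^{\rm eff}[C(z)-C(z')]\big).
\]
Averaging over the uniform \(s\) supplies the second factor \(2^{-q}\). Under planting \(\widetilde y=\widetilde HTs+\widetilde n\), we get \(\widetilde y-\widetilde HTz=\widetilde H\varepsilon+\widetilde n\), and similarly for \(z'\).

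\emph{Gaussian average.} We first fix \((s,z,z')\) and average over \(H\) and \(n\). In complex form, the row-\(k\) residual is \(e_k=h_k^{\top}\varepsilon^{\rm c}+n_k\). Because \(h_k\) and \(n_k\) are proper Gaussian, the pair \(X_k=(h_k^\top\varepsilon^{\rm c}+n_k,\ h_k^\top\varepsilon'^{\rm c}+n_k)\) is proper complex Gaussian. Its covariance is \(\Sigma+\sigma_n^2\mathbf 1\mathbf 1^\top\), with \(\sigma_H^2=1\). The \(n_r\) rows are i.i.d., so the average factorizes into the \(n_r\)-th power of \(\E\exp(-X^*\Lambda X)\), where \(\Lambda=\ii\gamma_1^{\rm eff}\diag(1,-1)\). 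The proper-Gaussian identity gives \(\det(I+\Gamma\Lambda)^{-1}\), valid whenever \(\Re\) of the quadratic form keeps the integral convergent. It holds here because \(\Lambda\) is purely imaginary and the determinant is understood by analytic continuation. Since the noise is shared between the two components, the bilinear structure lets the noise contribution be rewritten. Applying Schur or matrix-determinant-lemma manipulations to \(\det(I_2+(\Sigma+\sigma_n^2\mathbf 1\mathbf 1^\top)\Lambda)\) should yield \(\det(I_2+\Sigma D_1)\), with \(D_1\) as in \eqref{eq:planted_gram}. The route is to first integrate out \(n_k\) conditional on \(h_k\). This produces \(\exp(-\ii\gamma^{\rm eff}(|a|^2-|a'|^2))\) times the noise Fourier factor, which is Gaussian in \(a-a'\) with coefficient \(\sigma_n^2(\gamma^{\rm eff})^2\) after linearization. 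Integrating over \(h_k\) then gives the determinant.

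\emph{Main obstacle.} We expect this step to be the hardest. Integrating the noise exactly yields a factor \((1+\ii\gamma\sigma_n^2\cdots)\) and a quadratic term of a rational rather than polynomial form. We must therefore check that the stated \(D_1\), with its \(\sigma_n^2\gamma^2\) term, reproduces the exact determinant. If it does not, we will compute \(\det(I_2+\Gamma\Lambda)\) directly and verify equality through \(\det(\Lambda)\)-type cancellations.

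\emph{Invariants and convolution.} Expanding \(\det(I_2+\Sigma D_1)\) as a \(2\times2\) determinant shows that it depends only on \(\Sigma_{11}=d\), \(\Sigma_{22}=d'\), and \(\Sigma_{12}=u+\ii v\), which establishes \eqref{eq:planted_inv}. Each of \(d,d',u,v\) is a sum over symbols \(u\) of quantities built from \((\varepsilon^{\rm c}_u,\varepsilon'^{\rm c}_u)\). These in turn depend only on the \(b\) bits of \(s,z,z'\) at symbol \(u\). The mixer weight \(m(z)m(z')^*\) factorizes over bits, hence over symbols. The summand is therefore \(\prod_u\omega_u\cdot K(\sum_u\iota_u)\), where \(\omega_u\) is the per-symbol weight. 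Grouping per-symbol patterns by their \(\iota_u\) gives the table \eqref{eq:planted_table}, whose entries are summed weights per cell (the collision aggregation). The full sum is then \(\sum_\iota K(\iota)\,(\omega^{*n_t})(\iota)\), the \(n_t\)-fold convolution, with no \(2^{q}\) enumeration.
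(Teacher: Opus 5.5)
Your skeleton matches the paper's proof. The amplitude--conjugate expansion and the uniform average over $s$ give $2^{-2q}$, and the Gaussian average factorizes over the $n_r$ receive rows. Since $\iota$ is additive over symbols and $m(z)m(z')^{*}$ factorizes, aggregating the one-symbol weights per cell turns the sum into an $n_t$-fold convolution.

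\textbf{The gap} is the step that carries the content of \eqref{eq:planted_p1}. You never show that the per-row average equals $[\det(I_2+\Sigma D_1)]^{-1}$ with the stated $D_1$. You only say determinant-lemma manipulations ``should yield'' it, and your ``main obstacle'' predicts that integrating the noise exactly produces a rational factor. That prediction is wrong, and the reason is the missing observation: the ket and bra see the same noise sample. With $a_k=h_k^{\top}\varepsilon^{\rm c}$, $a'_k=h_k^{\top}\varepsilon'^{\rm c}$ and $\gamma=\gamma_1^{\rm eff}$, you have $|a_k+n_k|^2-|a'_k+n_k|^2=|a_k|^2-|a'_k|^2+2\Re[(a_k-a'_k)\overline{n_k}]$. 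The $|n_k|^2$ terms cancel, so the noise enters the phase linearly. Its average is therefore exactly $\exp(-\sigma_n^2\gamma^2|a_k-a'_k|^2)$, with no linearization or approximation. The remaining exponent is $-X_k^{\dagger}D_1X_k$ with $X_k=(a_k,a'_k)^{\top}\sim\mathcal{CN}(0,\Sigma)$. The identity $\E\,e^{-X^\dagger DX}=\det(I+\Sigma D)^{-1}$ holds for $\mathrm{Herm}(D)\succeq0$; here $\mathrm{Herm}(D_1)=\sigma_n^2\gamma^2(1,-1)^{\top}(1,-1)$. Proved by whitening, it is valid even for singular $\Sigma$, and it gives \eqref{eq:planted_p1}. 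This noise-first order is what the paper's proof uses.

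Your alternative, keeping the noise inside $X_k$ with covariance $\Gamma=\Sigma+\sigma_n^2\mathbf 1\mathbf 1^{\top}$, is also sound, and the identity you left open holds exactly.
\begin{itemize}
\item No analytic continuation is needed. $\Lambda=\ii\gamma\diag(1,-1)$ is anti-Hermitian, so after whitening $X=\Gamma^{1/2}Z$ each eigendirection of $\Gamma^{1/2}\Lambda\Gamma^{1/2}$ contributes the characteristic function $(1+\ii\lambda_j)^{-1}$ of an $\mathrm{Exp}(1)$ variable. This also covers singular $\Gamma$, which occurs for example at $z=z'$.
\item For the determinants, $\det\Gamma=\det\Sigma+\sigma_n^2(d+d'-2u)$. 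Hence $\det(I_2+\Gamma\Lambda)$ and $\det(I_2+\Sigma D_1)$ both equal $1+\ii\gamma(d-d')+\gamma^2\det\Sigma+\sigma_n^2\gamma^2(d+d'-2u)$.
\item Structurally, $D_1=\Lambda-\sigma_n^2\Lambda\mathbf 1\mathbf 1^{\top}\Lambda$ and $\mathbf 1^{\top}\Lambda\mathbf 1=0$. The matrix determinant lemma then writes both determinants as $\det(I+\Sigma\Lambda)\,[1+\sigma_n^2\mathbf 1^{\top}\Lambda(I+\Sigma\Lambda)^{-1}\mathbf 1]$.
\item The same argument extends to depth $p$, because the charges $c_p$ sum to zero.
\end{itemize}
With either fix inserted, the rest of your argument goes through.
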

\begin{proof}
Appendix~\ref{app:vp_proofs}.
\end{proof}

At general depth, the amplitude replicates into \(p\) forward and \(p\)
backward configurations, so the depth-one construction repeats with \(2p\)
charges
\(c_p=(\gamma_1^{\rm eff},\ldots,\gamma_p^{\rm eff},
-\gamma_1^{\rm eff},\ldots,-\gamma_p^{\rm eff})\) and phase--noise kernel
\(D_p=\ii\diag(c_p)+\sigma_n^2c_pc_p^{\!\top}\) as follows:
\begin{equation}
 \E P^{\rm pl}_{q,p}=\E_{\boldsymbol s}\Big[2^{-q}\!\!\sum_{\mathcal P_p}\!
 W_p(\mathcal P_p)\big[\det(I_{2p}+D_pG_p)\big]^{-n_r}\Big]
 \label{eq:planted_pp}
\end{equation}
where \(\mathcal P_p=(z^{(1)},\ldots,z^{(p)},z'^{(1)},\ldots,z'^{(p)})\)
lists the \(p\) ket and \(p\) bra slice words,
\(\varepsilon^{(\ell)}=T(s-z^{(\ell)})\) and
\(\varepsilon^{(p+\ell)}=T(s-z'^{(\ell)})\) are their symbol-domain errors in the
complex form of \eqref{eq:planted_inv}, and with transition signs
\(\zeta^{(\ell)}_i=z^{(\ell+1)}_iz^{(\ell)}_i\) (\(\ell<p\)),
\(\zeta^{(p)}_i=s_iz^{(p)}_i\), and \(\zeta'\) likewise from \(z'\), the error
Gram and path weight are  
$(G_p)_{\ell m}=\langle\varepsilon^{(\ell)}\!,\varepsilon^{(m)}\rangle$ and
 $W_p (\mathcal P_p)=\prod_{i=1}^{q}\prod_{\ell=1}^{p}
 g_{\beta_\ell}(\zeta^{(\ell)}_i)
 g_{\beta_\ell}(\zeta'^{(\ell)}_i)^{*}$ 
with \(g_\beta\) the mixer element of \eqref{eq:planted_mix}. Hermitian
\(G_p\) has \(4p^2\) real entries additive across symbols, and \(W_p\) is
multiplicative across them, so an exact convolution exists at every fixed
\(p\) but costs \(q^{\Theta(p^2)}\); practicality singles out \(p=1\), where
\(G_p=\Sigma\) and \(W_p=m(z)m(z')^{*}\).

\begin{algorithm}[t!]
\footnotesize
\caption{Sampling objective $O_s=\rs(\theta)$ and exact planted
benchmark $\rpl(p)$}
\label{alg:sampling}
\begin{algorithmic}[1]
\Require $p,\gamma[1{:}p],\beta[1{:}p]$;
$\Xi_s=(b_R,b_I,n_t,n_r,\sigma_H^2,\mathrm{SNR}_{\rm dB},\mathrm{mode})$;
$\mathrm{target}\in\{\text{reference rate},\text{exact ensemble}\}$; frozen
$\mathcal I$ for the ref. rate
\Ensure $\rs(\theta;\mathcal I)$, or $\rpl(p)$, per $\mathrm{target}$
\State Set $b,q,w_{P,r},E_s$ and
$\sigma_n^2=(n_t\sigma_H^2)^{\mathbf1_{\{\mathrm{mode=rx}\}}}E_s10^{-\mathrm{SNR}_{\rm dB}/10}$
\State Form $c_p=(\gamma^{\rm eff},-\gamma^{\rm eff})$,
$D_p=\ii\diag(c_p)+\sigma_n^2c_pc_p^{\!\top}$, the $2p$ errors
$\varepsilon^{(\ell)}$, $W_p(\mathcal P_p)$, and additive $G_p$ in
\eqref{eq:planted_pp}
\If{$\mathrm{target}=\text{reference rate}$}
  \State Set $q\gets q_{\rm ref}$ from $\mathcal I$
  \ForAll{$I\in\mathcal I$}
    \State Find $\widehat z_I=\arg\min_z\bar C_{q,I}(z)$, ties broken
    lexicographically
    \State Prepare $|\psi_I\rangle=\prod_{\ell=p}^{1}e^{-\ii\beta_\ell B}
    e^{-\ii\gamma_\ell\bar C_{q,I}}|+\rangle^{\otimes q}$ 
    \State Set $P_I=|\langle\widehat z_I|\psi_I\rangle|^2$
  \EndFor
  \State \Return $\rs(\theta;\mathcal I)=
  -(q_{\rm ref}|\mathcal I|)^{-1}\sum_{I\in\mathcal I}\log P_I$
\EndIf
\If{$\mathrm{target}=\text{exact ensemble}$}
  \State Enumerate the $2^{b(2p+1)}$ one-symbol tuples into the
  $4p^2$-coordinate table $A_p$, weighted by $2^{-b}$ times the one-symbol
  factor of $W_p$ (at $p=1$, \eqref{eq:planted_table}); for $p=1$ use
  \eqref{eq:planted_inv}-\eqref{eq:planted_mask}
  \State Convolve $C_p\gets\operatorname{DFT}^{-1}[(\operatorname{DFT}A_p)^{n_t}]$
  \State Set $\mathcal U_p\gets\{G:C_p[G]\ne0,\ G\text{ reachable}\}$
  \State Accumulate $\Pi\gets2^{-q}\sum_{G\in\mathcal U_p}
  C_p[G]\det(I_{2p}+\sigma_H^2D_pG)^{-n_r}$
  \State \Return $\rpl(p)=-q^{-1}\log\Pi$
\EndIf
\end{algorithmic}
\end{algorithm}

\subsubsection{Evaluation of the planted law}
Three ingredients make \eqref{eq:planted_p1} computable without enumerating
\(2^{q}\) configurations: a per-symbol weight table, a reachability mask,
and an alias-free transform.
For one symbol, assign to \((s,\tau,\tau')\) the invariant \(\iota\) of
\eqref{eq:planted_inv} and the following weight:
\begin{equation}
 \varpi(s,\tau,\tau')=2^{-b}\prod_{i=1}^{b}g(\tau_i)\,g(\tau'_i)^{*}
 \label{eq:planted_table}
\end{equation}
the per-symbol factor of \(m(z)m(z')^{*}\) in \eqref{eq:planted_p1}, with
\(2^{-b}\) its share of the uniform average over the planted symbol.
Only reachable cells contribute, since \(\Sigma\succeq0\) forces the following:
\begin{equation}
 \det\Sigma\ge0,\qquad \|\varepsilon-\varepsilon'\|^2
 =d+d'-2u\ \ge0 
 \label{eq:planted_mask}
\end{equation}
and $\Re\det(I_2+\Sigma D_1)=1+\sigma_n^2(\gamma^{\rm eff})^2
(d{+}d'{-}2u)+(\gamma^{\rm eff})^2\det\Sigma\ge1$ there, so the inverse
determinant stays bounded and the transform is pole-free. The mask is what makes the exact law
computable: with the nonzero-support condition it leaves a few percent of the
box, so the \(O(p^3)\) determinant is evaluated on
\(\mathcal U_p\ll\mathcal L_p\) cells. Because \(\iota\) is additive across
symbols, the \(n_t\)-symbol law is the \(n_t\)-fold convolution
\(C_p=\operatorname{DFT}^{-1}[(\operatorname{DFT}A_p)^{n_t}]\) of the one-symbol table
\(A_p\) of \(\varpi\) values, with \(\operatorname{DFT}\) the alias-free discrete
Fourier transform on the \(4p^2\)-axis invariant lattice \(\mathcal L_p\), four-dimensional in
\(\iota\) at \(p=1\). All four invariants are multiples of
\(4\), so with \(d_{\max}=\{[2(2^{b_R}\!-\!1)]^2+[2(2^{b_I}\!-\!1)]^2\}/4\)
the table is held on an
\((n_td_{\max}\!+\!1)^2\times(2n_td_{\max}\!+\!1)^2\) array by the encoder
\(\mathrm{enc}(\iota)=(d/4,d'/4,u/4+d_{\max},v/4+d_{\max})\); the $n_t$-fold
convolution accumulates the offset to $n_td_{\max}$, and the output is
decoded by \(\mathrm{dec}(i,j,k,l)=4(i,j,k-n_td_{\max},l-n_td_{\max})\); only cells with
\(\chi(\iota)=\mathbf 1\{dd'-u^2-v^2\ge0,\ d+d'-2u\ge0\}\) survive.
Algorithm~\ref{alg:sampling} implements these in its exact-ensemble
branch, returning the
frozen-reference rate \(\rs\) at any tractable depth, or the exact
four-coordinate ensemble law at \(p=1\).
Equation~\eqref{eq:planted_pp} stays exact for every fixed \(p\), but its
lattice grows as \(q^{\Theta(p^2)}\) at fixed bit depth, bounded by
\((n_td_{\max}{+}1)^{2p}(2n_td_{\max}{+}1)^{4p^2-2p}\); the proxy
\(q^{4p^2}\) reaches \(10^{29}\) at
\(q=64,p=2\), so reported \(O_s\) packs use the frozen-reference objective
beyond $p = 1$. Table~\ref{tab:planted_cost} lists the design and benchmark costs; both
objectives set angle values in the same circuit, so deployment is the
per-shot cost of Table~\ref{tab:complexity} in either case.
Either evaluator can serve the design, and both are offline: incurred once per
pack and amortized over every deployment instance and size. On hardware, transmitting a known word, the
planted rate $\hat r_{\rm pl}$ is estimated from shots at the deployed size  (its
gap to the ML rate is bounded in Sec.~\ref{sec:rate_relations}), so no reference size is
needed; the packs reported here are instead evaluated by exact statevector
simulation, whose \(\Theta(N_{\rm tr}pq2^{q})\) cost per call caps the design
size at \(q_{\rm ref}\le24\) and makes transfer a practical necessity rather
than a property of the objective.

\begin{table}[t]
	\centering
	\caption{Costs for the sampling objective \(O_s\) and  ensemble law}
	\label{tab:planted_cost}
	\begingroup
	\footnotesize
	\setlength{\tabcolsep}{1.0pt}
	\renewcommand{\arraystretch}{1.05}
	\begin{tabularx}{\columnwidth}{|>{\raggedright\arraybackslash}p{0.200\columnwidth}|X|
			>{\raggedright\arraybackslash}p{0.180\columnwidth}|X|}
		\hline
		\multicolumn{2}{|c|}{\emph{Offline design, any \(p\)}} &
		\multicolumn{2}{c|}{\emph{Exact ensemble law, fixed \(p\)}}\\
		\hline
		Stage & Cost & Stage & Cost\\
		\hline
		Quantum eval. & \(\Theta(\epsilon^{-2}P^{-1}\)\newline\(\log(1/\delta))\) shots/inst. &
		Symbol table & \(O((p \,b{+}p^2)2^{b}4^{p b})\)\\
		\hline
		Classical eval. & \(\Theta(N_{\rm tr}pq2^{q})\)\newline per value, gradient &
		Invariant lattice & \(|\mathcal L_p|=q^{\Theta(p^2)}\)\\
		\hline
		Ensemble & \(O(N_{\rm tr})\) avg.\ of logs &
		Convolution & \(O(|\mathcal L_p|\log|\mathcal L_p|)\)\\
		\hline
		Search & \(N_{\rm fev}\) obj.\ calls &
		Determinants & \(O(|\mathcal U_p|p^3)\), \(\mathcal U_p\!\ll\!\mathcal L_p\)\\
		\hline
	\end{tabularx}
	\endgroup
\end{table}

\subsubsection{Estimator and rate relations}\label{sec:rate_relations}
Three rates appear, differing in which word is targeted and in the order of
expectation and logarithm. The design objective \eqref{eq:rs_def} is a
$q_{\rm ref}$ sample average of $-\log P^{\rm ML}$ over $\mathcal I$; at fixed
$\theta$ it estimates the population rate
$r_{\rm ML}(\theta,p)=-q^{-1}\E[\log P^{\rm ML}_{q,p}(\theta)]$, with sampling
error vanishing as $|\mathcal I|\to\infty$ when reference and deployment laws
match, although this says nothing about the error of selecting $\theta$.
All rates here are finite-$q$ objects; where
Proposition~\ref{prop:shot_bound} invokes a $q\to\infty$ limit $r_\infty(p)$, its existence
is an explicit hypothesis there.
Writing $\rml(p)\triangleq r_{\rm ML}(\theta^\star_p,p)$ for the
offline-selected pack $\theta^\star_p$ gives the quenched ML rate
($\E\log$) that the depth law and Proposition~\ref{prop:shot_bound} use;
the residual
reference/deployment discrepancy is transfer, measured in
Section~\ref{sec:r5sims} at 2\% on matched fits. The benchmark
\eqref{eq:rpl_def} instead reports $\rpl(p)$, annealed ($\log\E$) and planted
(on $s$), computable efficiently only at $p=1$; the annealed ML counterpart
$\hat r^{\rm ann}_{\rm ML}(p)=-q^{-1}\log\E P^{\rm ML}_{q,p}$ is what
Fig.~\ref{fig:objcmp}(b),(d) fit. Rate types must therefore be matched before
a difference is read as transfer. Since $P^{\rm ML}=P^{\rm pl}$ whenever
$\hat z=s$, the quenched rates $\rml$ and
$r^{\rm q}_{\rm pl}\triangleq-q^{-1}\E\log P^{\rm pl}$ differ only through
word errors: $|\rml-r^{\rm q}_{\rm pl}|
=q^{-1}\big|\E[\log(P^{\rm ML}/P^{\rm pl});\hat z\neq s]\big|
\le q^{-1}\sqrt{P_e\E[\log^2(P^{\rm ML}/P^{\rm pl})]}$ with
$P_e=\Pr[\hat z\neq s]$, and coincide over the tested 10--20\,dB
word-equality range.

\subsection{Shot Budgets and the Decoder Gap}\label{sec:shotbound}
Subexponential shot budget cannot move the leading large-$q$ sampling
exponent, so lowering the plateau's exponent requires improving \(\rml\) rather
than spending more shots; at finite $q$, larger $S$ still shrinks the
gap \eqref{eq:shot_gap}. For hit probability \(h_S(P)=1-(1-P)^S\), Bernoulli
gives \(P\le h_S(P)\le SP\).

\begin{proposition}[shot budget and sampling exponent]
\label{prop:shot_bound}
Let \(P^{\rm ML}_{q,p}\) be the per-shot ML mass and \(\rml(p)\) its
quenched rate; assume $P^{\rm ML}_{q,p}>0$ a.s.\ with
$\E\,|\log P^{\rm ML}_{q,p}|<\infty$. Then,
\begin{equation}
 \rml(p)-\frac{\log S}{q}\ \le\
 -\frac1q\E\big[\log h_S(P^{\rm ML}_{q,p})\big]\ \le\ \rml(p)
 \label{eq:shot_sandwich}
\end{equation}
so a subexponential budget leaves the exponent unchanged, while for
\(S_q=e^{\kappa q+o(q)}\) and \(-q^{-1}\log P^{\rm ML}\to r_\infty(p)\) in
probability (existence of this limit is a hypothesis) with
\(\{q^{-1}\log h_S\}\) uniformly integrable, the hit
exponent tends to \((r_\infty(p)-\kappa)_+\), with
\((x)_+\triangleq\max\{x,0\}\). 
Let \(\mathrm{BER}_{p,S}\) be the ensemble BER of minimum-cost selection among
\(S\) depth-\(p\) samples and \(\mathrm{BER}_{\rm ML}\) the exact-ML BER under
the same labeling and instance law, the ML word being almost surely unique
for continuous \(H,\widetilde n\).
Since minimum-cost selection returns the ML
word whenever it is drawn as follows:
\begin{equation}
 \big|\mathrm{BER}_{p,S}-\mathrm{BER}_{\rm ML}\big|\ \le\
 \E\big[(1-P^{\rm ML}_{q,p})^{S}\big]
 \label{eq:shot_gap}
\end{equation}
\end{proposition}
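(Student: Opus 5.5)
The proof splits into three parts: the sandwich \eqref{eq:shot_sandwich}, the exponential-budget limit, and the BER gap \eqref{eq:shot_gap}. Each rests on the elementary Bernoulli bounds \(P\le h_S(P)\le SP\) already noted before the statement.

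\emph{Sandwich.} First check these bounds for \(P\in(0,1]\). The lower one holds because \((1-P)^S\le 1-P\). The upper one is the union bound \(1-(1-P)^S\le SP\), for instance by induction on \(S\) or by convexity of \(x\mapsto(1-x)^S\). Taking logarithms gives
\[
\log P\le\log h_S(P)\le\log S+\log P .
\]
Since \(P^{\rm ML}_{q,p}>0\) a.s.\ and \(\E|\log P^{\rm ML}_{q,p}|<\infty\), every term is integrable, and \(0\ge\log h_S\ge\log P\) also gives integrability of \(\log h_S\). Multiplying by \(-1/q\) and taking expectations yields \eqref{eq:shot_sandwich} directly, using the definition \(\rml(p)=-q^{-1}\E\log P^{\rm ML}_{q,p}\). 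If \(S=e^{o(q)}\), then \(\log S/q\to0\), so the hit exponent and \(\rml\) share their limit points, which is the ``exponent unchanged'' claim.

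\emph{Exponential budget.} Set \(X_q=-q^{-1}\log P^{\rm ML}_{q,p}\), so that \(X_q\to r_\infty\) in probability. The plan is to show \(-q^{-1}\log h_{S_q}(P)\to(r_\infty-\kappa)_+\) in probability and then use uniform integrability to pass to expectations. This step needs the most care, and it splits into two regimes.

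On the upper side, \(h_S\ge P\) and \(h_S\ge 1-(1-P)^S\ge 1-e^{-SP}\). When \(SP\to\infty\) exponentially, which happens when \(\kappa>r_\infty\), the right-hand side tends to \(1\), so \(-q^{-1}\log h_S\to0\). When \(\kappa<r_\infty\), the inequality \(h_S\le SP\) gives \(-q^{-1}\log h_S\ge X_q-\kappa-o(1)\). For the matching bound, use \(1-e^{-x}\ge x/2\) for \(x\le1\), so that \(h_S\ge SP/2\) whenever \(SP\le1\); this gives \(-q^{-1}\log h_S\le X_q-\kappa+o(1)\).

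Combining the regimes pointwise on the event \(|X_q-r_\infty|<\delta\) gives \(|-q^{-1}\log h_{S_q}-(r_\infty-\kappa)_+|\le\delta+o(1)\). The boundary case \(\kappa=r_\infty\) is covered by the same two bounds, since both limits there equal \(0\). Convergence in probability together with uniform integrability then gives convergence in \(L^1\), hence convergence of the expectations.

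\emph{BER gap.} Work per instance, conditioning on \((H,\widetilde n,s)\). Let \(E\) be the event that the ML word \(\hat z\) appears among the \(S\) i.i.d.\ samples, so \(\Pr(E^c\mid I)=(1-P^{\rm ML})^S\). On \(E\), minimum-cost selection returns \(\hat z\); here a.s.\ uniqueness of the minimizer is needed. So the selected word's bit errors against \(s\) equal those of \(\hat z\). Writing \(e(\cdot)\in[0,1]\) for the fraction of bit errors, the conditional difference is
\[
\E\big[(e(\text{sel})-e(\hat z))\mathbf 1_{E^c}\mid I\big],
\]
whose absolute value is at most \(\Pr(E^c\mid I)\). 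Averaging over the instance law and applying Jensen/triangle inequality gives \eqref{eq:shot_gap}.

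The main obstacle is the uniform-integrability limit in the second part. The sandwich and the BER bound are routine.
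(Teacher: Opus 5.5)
Your proposal is correct and follows the paper's own argument. The sandwich \eqref{eq:shot_sandwich} comes from taking logarithms and expectations in \(P\le h_S(P)\le SP\), and \eqref{eq:shot_gap} holds because the two decoders agree whenever the ML word is drawn and otherwise differ by at most one, an event of conditional probability \((1-P^{\rm ML}_{q,p})^S\). You also prove the exponential-budget limit, which the paper's short proof leaves unproved: you split regimes using \(h_S\ge 1-e^{-SP}\) and \(h_S\ge SP/2\) for \(SP\le 1\), then pass to expectations with the assumed uniform integrability, and this argument is valid.
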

\begin{proof}
Take logs and expectations in \(P\le h_S(P)\le SP\) for
\eqref{eq:shot_sandwich}. For \eqref{eq:shot_gap}, the decoders agree on an ML hit; otherwise they differ
by at most one, an event of probability \((1-P^{\rm ML})^{S}\).
\end{proof}
Therefore, the residual BER gap is bounded by the probability of never
drawing the ML word in \(S\) shots, which shrinks with the budget \(S\)
and, where depth raises \(P^{\rm ML}\), with depth. 
Exact ML minimizes word error, so a single sampled
instance may carry fewer bit errors than the ML word; no from-above or
monotone approach to $\mathrm{BER}_{\rm ML}$ is guaranteed, and the reported
curves lie above their floors empirically.

\begin{table}[t]
	\centering
	\caption{Simulation Groups and Parameters}
	\label{tab:simulation_census}
	\begingroup
	\footnotesize
	\setlength{\tabcolsep}{0.7pt}
	\setlength{\arrayrulewidth}{0.3pt}
	\renewcommand{\arraystretch}{0.94}
	\begin{tabular}{|
			>{\centering\arraybackslash}p{0.115\columnwidth}|
			>{\centering\arraybackslash}p{0.220\columnwidth}|
			>{\centering\arraybackslash}p{0.220\columnwidth}|
			>{\centering\arraybackslash}p{0.085\columnwidth}|
			>{\centering\arraybackslash}p{0.160\columnwidth}|
			>{\centering\arraybackslash}p{0.140\columnwidth}|}
		\hline
		Fig. & $M$ & $n_t{\times}n_r$ & $p_{\max}$ & $S$ & $N$ \\
		\hline
		\ref{fig:objcmp}, \ref{fig:objective} & 16/64 & $5\times5$, $3\times3$ &
		6/10 & $2^{14}$ & 1000 \\
		\hline
		\ref{fig:paired_q24} & 16/64 & $6\times6$, $4\times4$ &
		8 & $2^{9,14,18}$ & 2000 \\
		\hline
		\ref{fig:depth_panels}(a, b) & 16/64 & $6\times6$, $4\times4$ &
		24 & $2^{9,14,18}$ & 2000 \\
		\hline
		\ref{fig:depth_panels}(c) & 128 & $2\times10$ &
		12 & $2^{10,14,18}$ & 8000 \\
		\hline
		\ref{fig:depth_panels}(e) & 1024 & $2\times48$ &
		16 & $2^{10,14,18}$ & 5000 \\
		\hline
		\ref{fig:depth_panels}(d, f) & 256/1024 & $2\times2$ &
		24 & $2^{9,14,18}$ & 2k/tier \\
		\hline
		\ref{fig:depth_panels}(g) & 4096 & $2\times2$ &
		30 & $2^{10,14,18}$ & 1500 \\
		\hline
		\ref{fig:depth_panels}(h) & \makecell[c]{16/64, 256/1024} &
		$2{\times}2$--$5{\times}5$ &
		20 & -- & 0.4/2k \\
		\hline
		\ref{fig:corr_depth_size}(a--e) & 16 & $5\times5$ &
		8 & -- & 24/pt \\
		\hline
		\ref{fig:largen} & 16 & $2{\times}2$--$128{\times}128$ &
		2 & -- & 0.3--8k/pt \\
		\hline
	\end{tabular}
	\endgroup
\end{table}
\section{Simulations}\label{sec:numerics}

Statevectors evaluate tractable deployments; cavity and subset
calculations are size-scaling diagnostics only. Table~\ref{tab:simulation_census}
catalogs the principal \(\Rea/\Reb\), scheduled \(O_s\), and paired
simulations; the captions of
Figs.~\ref{fig:objcmp} and \ref{fig:objective} specify the \(O_E/O_s\)
settings. ``pt'' means per point, a dash marks an
expectation, and slashes follow listed order. Row~(h) lists five fit-only
simulations as sets; its other two fit tiers are rows~(a),(g).

\subsection{Comparison of \(O_E\) and \(O_s\)}\label{sec:objective}

\subsubsection{Identical-Instance \(O_E\) versus \(O_s\)}
\label{sec:oe_vs_os}
Objectives $O_E$ and $O_s$ minimize \eqref{eq:Vpcorr} and the
frozen-reference ML rate~\eqref{eq:rs_def}. As shown in Fig.~\ref{fig:objcmp}(a) for 16-QAM $5\times5$ at 20\,dB,
disjoint from design, and the common budget $S=2^{14}$, the two routes
nearly coincide at $p=1$, and the $O_s$ advantage then grows with depth,
its BER falling more than four orders of magnitude below $O_E$ by $p=6$.  The $3\times3$ 64-QAM records
at 28\,dB in (c) repeat this pattern against the exact-ML line, the
matched-depth gap reaching $36\times$ and the energy route sitting $89\times$
above exact ML. The bound
\eqref{eq:shot_gap} is satisfied at every measured point in these paired
records.

16-QAM fits in Fig.~\ref{fig:objcmp}(b) give the following:
\begin{equation}
 \hat r^{\rm ann}_{O_s}=0.503\,p^{-0.451},\qquad
 \hat r^{\rm ann}_{O_E}=0.494\,p^{-0.162}
 \label{eq:objective_exponents}
\end{equation}
while the $3\times3$ 64-QAM fits in (d) give $0.335$ and $0.073$. The
prefactors nearly coincide while the exponents split: under the fitted
laws, doubling the depth lowers the $O_E$ rate by only about $11\%$
($5\%$ at 64-QAM), against $27\%$ ($21\%$) for $O_s$, so the deficit
is in the depth response, not the depth-one start.  
 
Fig.~\ref{fig:objective} shows  SNR dependence of the paired
designs: Figs.~\ref{fig:objective}(a)--(d) for varying per-symbol SNR at each depth for $O_E$ and
then $O_s$, on 16-QAM ($5\times5$) over 0--20\,dB and  64-QAM ($3\times3$)
over 8--28\,dB while Figs.~\ref{fig:objective}(e)--(h) show both
energy references, $\langle\bar C_q\rangle/C_{\rm ML}$ and
$\langle\bar C_q\rangle/C_{\rm tx}$, for varying $p$ at 12\,dB (16-QAM) and
20\,dB (64-QAM), in the same order;
$\langle\bar C_q\rangle/C_{\rm tx}$ agrees to $10^{-4}$ throughout, since the
transmitted word is the ML word over this range.
$O_s$ curve for $p = 10$  in (d) stays within $1.33\times$ ML over the full
8--28\,dB span, whereas $O_E$ in (b) about $57\times/89\times$ ML
at 24/28\,dB. Consistent with the small $O_E$ exponents of
\eqref{eq:objective_exponents}, the $O_E$ depth ladders in (a),(b) remain
tightly bunched across the sweep, while the $O_s$ ladders in (c),(d)
spread toward the exact-ML reference. $O_s$ energy ratios in
(g),(h) are nonmonotone in depth although its BER falls, so  $O_s$ trades mean energy for ML mass.

\begin{figure}[t]
\centering
\includegraphics[width=\columnwidth]{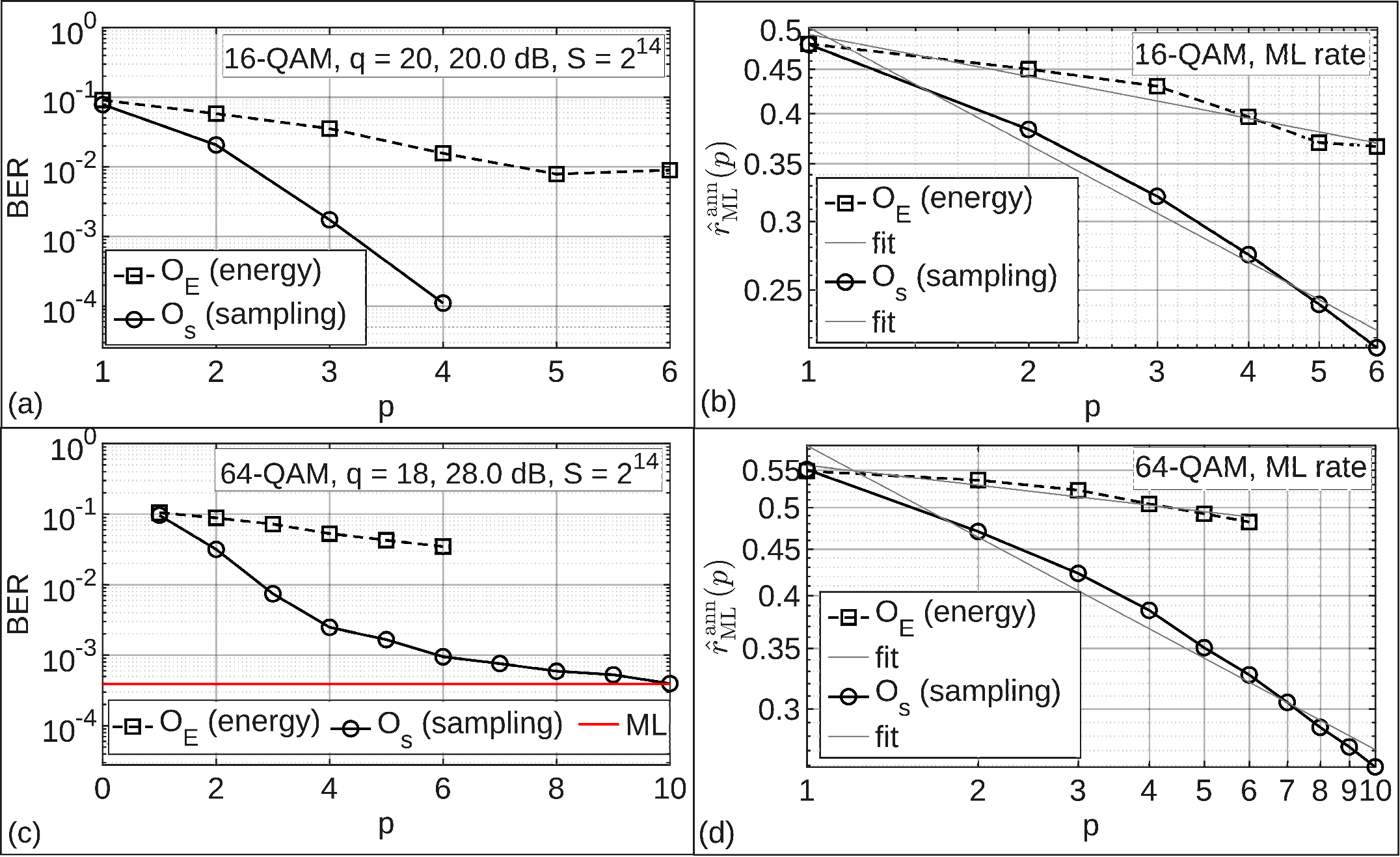}
\caption{Identical-instance $O_E$  versus $O_s$ at
$S=2^{14}$. (a) exact best-of-$S$ BER, 16-QAM ($5\times5$) at 20\,dB;
(b) annealed rate $\hat r^{\rm ann}_{\rm ML}(p)$ for the same runs, with
power-law fits; (c), (d) repeat (a), (b) for 64-QAM ($3\times3$) at 28\,dB, where
the flat line in (c) is the exact-ML BER.}
\label{fig:objcmp}
\end{figure}

\begin{figure}[t]
\centering
\includegraphics[width=\columnwidth]{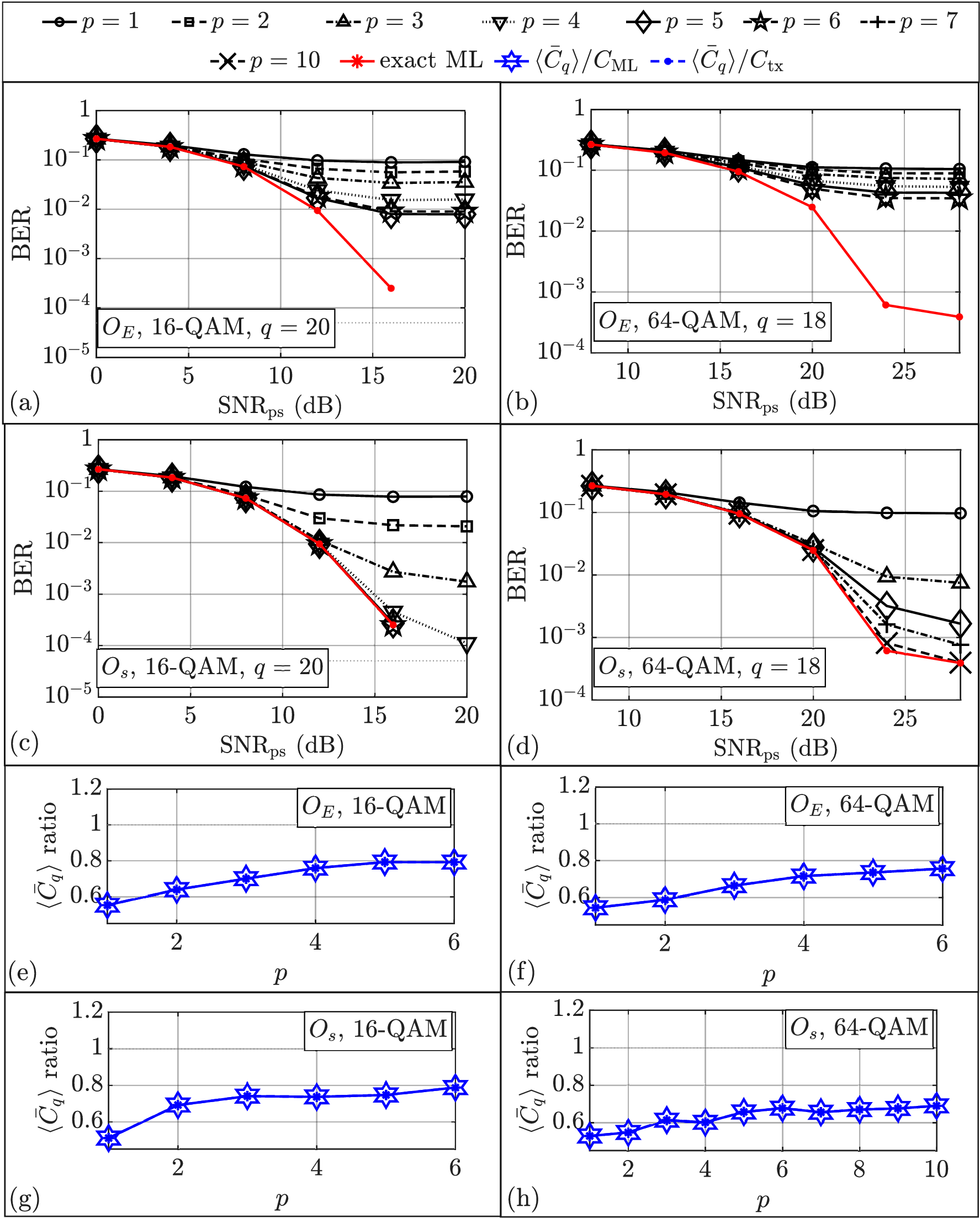}
\caption{Paired $O_E$ and $O_s$ results at $S=2^{14}$.  BER vs. $\mathrm{SNR}_{\rm ps}$ by depth:
(a) $O_E$ at 16-QAM ($5\times5$), (b) $O_E$ at    64-QAM ($3\times3$), (c) and (d) for $O_s$ at the same two.
(e)--(h) give the corresponding energy ratios $\langle\bar C_q\rangle/C_{\rm ML}$ and $\langle\bar C_q\rangle/C_{\rm tx}$
vs.\ $p$, in the same order, at 12\,dB (16-QAM) and 20\,dB (64-QAM).}
\label{fig:objective}
\end{figure}

\begin{figure}[t]
\centering
\includegraphics[width=\columnwidth]{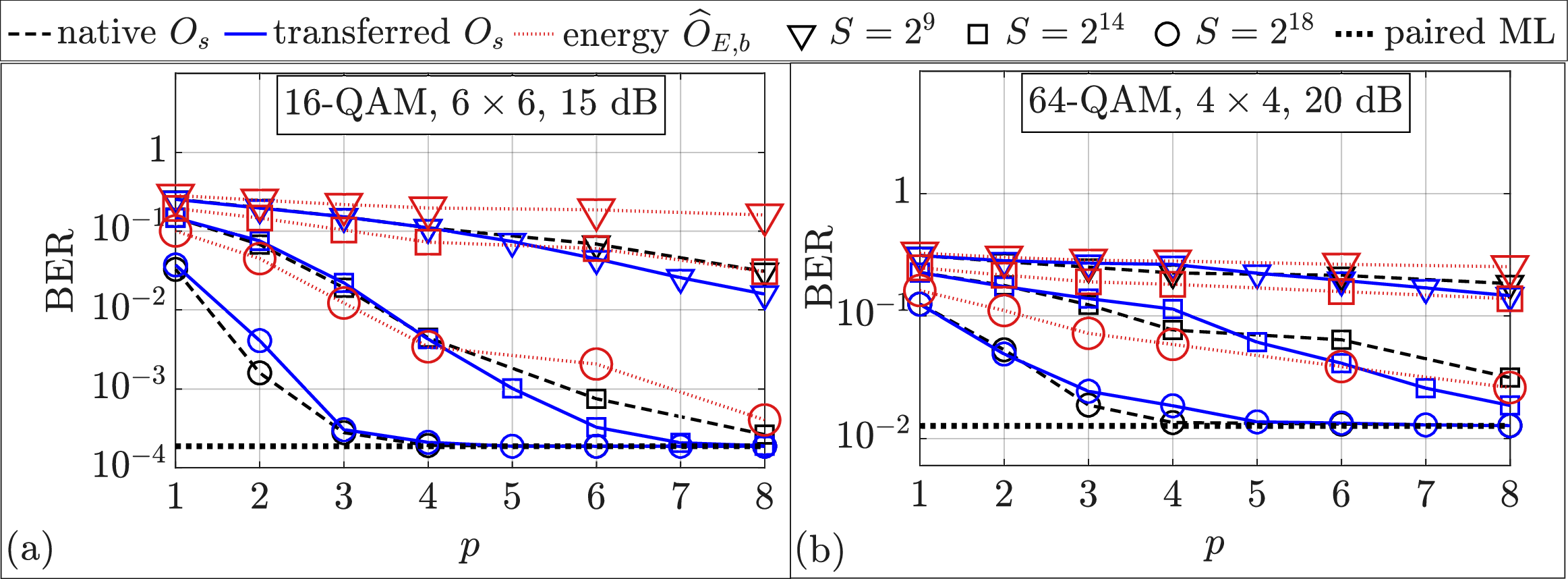}
\caption{Paired $q=24$ BER, three arms per panel: transferred $O_s$ from
$q_{\rm ref}=12$ blue solid, natively designed at $q=24$ black dashed (free
search at $p\le4$, four-parameter schedule at $p\ge6$); markers
$S=2^9,2^{14},2^{18}$ and dotted lines same-instance ML floors.
(a) 16-QAM ($6\times6$), 15\,dB; (b) 64-QAM ($4\times4$), 20\,dB; both packs
designed at 13/25\,dB, so both are off-design in SNR and the transferred pack
also in size; red dotted adds the energy-designed $\Reb$ pack
($q_{\rm ref}=12$, 15/20\,dB, at the deployed SNR).}
\label{fig:paired_q24}
\end{figure}

\begin{figure*}[!t]
\centering
\includegraphics[width=\textwidth]{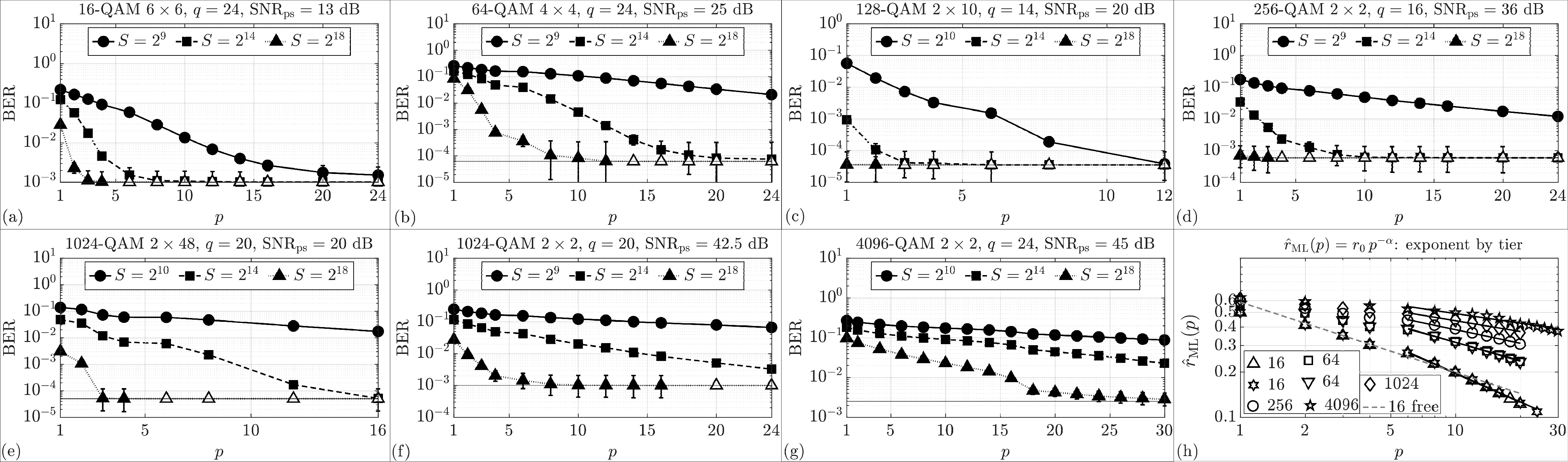}
\caption{Native scheduled \(O_s\) chains, exact-statevector best-of-$S$ BER
versus depth; thin
lines are paired ML floors, open markers coincide numerically with them, bars
are 95\% intervals. (a) 16-QAM $6\times6$, $q=24$, 13\,dB; (b) 64-QAM $4\times4$, $q=24$,
25\,dB; (c) 128-QAM $2\times10$, $q=14$, 20\,dB; (d) 256-QAM $2\times2$,
$q=16$, 36\,dB; (e) 1024-QAM $2\times48$, $q=20$, 20\,dB; (f) 1024-QAM
$2\times2$, $q=20$, 42.5\,dB; (g) 4096-QAM $2\times2$, $q=24$, 45\,dB;
(h) schedule-branch fits $\rml(p)=r_0 \, p^{-\alpha}$ by tier, with the
$6\times6$ free-arm reference dashed. Chains minimize the ML rate under the
four-parameter schedule at $p\ge6$.}
\label{fig:depth_panels}
\end{figure*}

\begin{figure}[t]
\centering
\includegraphics[width=\columnwidth]{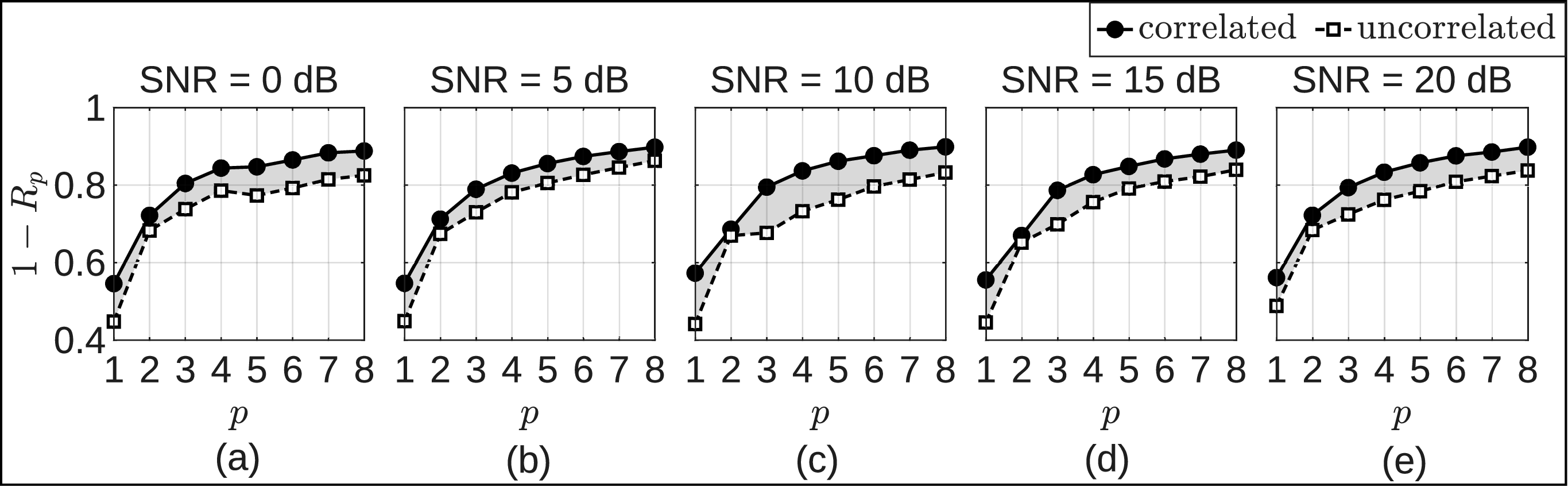}
\caption{Transferred \(\Rea\) 16-QAM packs trained at \(q_{\rm ref}=12\):
correlation-aware vs. variance-matched independent, shading their gap. (a)--(e) give $1-R_p$ of \eqref{eq:Rp}
vs. $p$ at
$q=20$   with $\mathrm{SNR}_{\rm ps}=0,5,10,15,20$\,dB.}

\label{fig:corr_depth_size}
\end{figure}

\begin{figure}[t]
\centering
\includegraphics[width=\columnwidth]{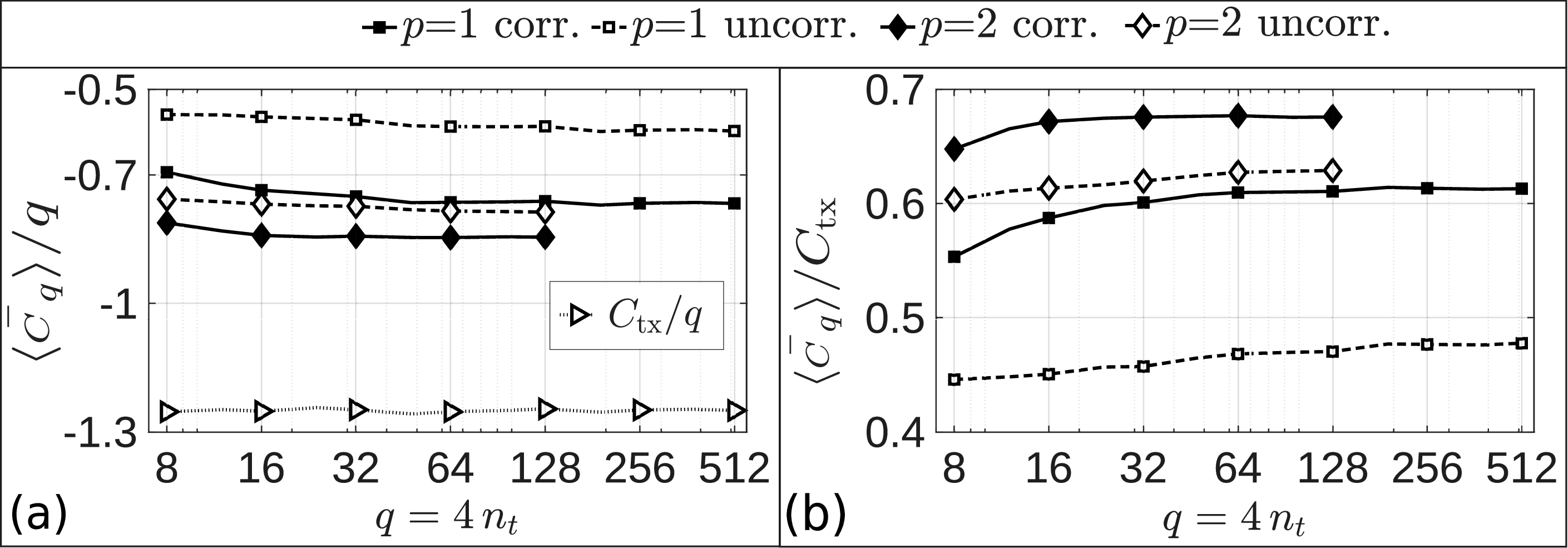}
\caption{Finite-$q$ \(\Rea\) 16-QAM expectations at
$\mathrm{SNR}_{\rm ps}=10$\,dB, correlation-aware vs. independent; exact \(p=1\) reaches $q=512$ and subset-truncated \(p=2\) reaches
$q=128$. (a) size-stable energy density $\langle\bar C_q\rangle/q$ with the
transmitted-word $C_{\rm tx}/q$ dotted; (b) ratio
$\langle\bar C_q\rangle/C_{\rm tx}$.}
\label{fig:largen}
\end{figure}

\subsubsection{Transferred versus Natively Designed \(O_s\) Chains}
\label{sec:reb_vs_os}
Fig.~\ref{fig:paired_q24} asks what transferring costs against designing
at the deployed size. Both arms minimize an ML rate on the same $2000$
instances, one shared cost table, $S=2^9,2^{14},2^{18}$, and same-instance
ML floors: the transferred arm carries the $q_{\rm ref}=12$ chain of
Section~\ref{sec:r5sims} to $q=24$, while the native arm is optimized at
$q=24$ with all $2p$ angles free for $p\le4$ and the schedule beyond.
Transfer costs almost nothing: at $S=2^{18}$ both reach the same ML floor
from $p=6$, within one standard error of it and of each other. The residual
ordering tracks the search restriction rather than the design size, the
native arm leading where it searches freely and the transferred pack where
the schedule binds.
A third arm per panel, designed on the energy objective $\Reb$ at the
same $q_{\rm ref}$ and the deployed SNR, then deployed identically ---
despite this on-SNR design advantage over the off-design sampling packs ---
separates from both: at $S=2^{18}$ it sits $16\times$ ($\EbBmid\times$)
above the transferred chain at $p=4$ in (a) ((b)), and where both sampling
arms sit on their floors by $p=8$ it remains $2.2\times$ ($\EbBend\times$)
above -- the objective, not design size or search family, sets the depth at
which ML sampling becomes reliable.

\subsubsection{\(O_s\) Engine Validation and Transfer}\label{sec:r5sims}
Table~\ref{tab:os_records} certifies the engine behind the $O_s$ curves
of Figs.~\ref{fig:objcmp} and \ref{fig:paired_q24}: the chains minimize
the frozen-reference ML rate at every depth and transfer unchanged.
Proposition~\ref{prop:planted_p1}'s exact benchmark is a consistency
check, not the shipped chain seed, and the brute-force evaluation of
\eqref{eq:planted_pp} confirms its Monte-Carlo estimate to $0.3\%$.
The matched-rate $q_{\rm ref}{=}12$ reference and deployed
Fig.~\ref{fig:objcmp}(d) fits agree within $2\%$, whereas the chain's quenched fit differs, evidence
for transfer/objective separation, not asymptotics.

\begin{table}[t]
\centering
\caption{\(O_s\) engine validation and transfer records.}
\label{tab:os_records}
\footnotesize
\setlength{\tabcolsep}{2pt}
\begin{tabular}{lll}
\hline
Check & Setting & Result\\
\hline
Planted benchmark \(\rpl(1)\), exact & \(q{=}12/20\) & \(0.47698/0.47925\)\\
\eqref{eq:planted_pp} brute vs.\ MC & \(p{=}2\), \(N{=}3{\times}10^5\) & \(0.2266/0.2260\)\\
Annealed fit, reference & \(q_{\rm ref}{=}12\) & \(0.587\,p^{-0.341}\)\\
Annealed fit, deployed & \(q{=}18\), 28\,dB & \(0.585\,p^{-0.335}\)\\
Quenched fit, reference chain & \(q_{\rm ref}{=}12\), 18\,dB &
\(0.590\,p^{-0.319}\)\\
\hline
\end{tabular}
\end{table}

\subsubsection{Exact BER--Depth Curves and the Depth Law}\label{sec:depth_law}
Exact best-of-\(S\) BER counts bit errors under an axiswise binary-reflected
Gray labeling: the \(2^{b_P}\) amplitude levels of each axis carry the bit
patterns \(g_P(k)=k\oplus\lfloor k/2\rfloor\), \(k=0,\dots,2^{b_P}-1\), so
adjacent levels differ in one bit. Native chains minimize the ML rate with
free low-depth angles then a four-parameter schedule; each simulation in
Fig.~\ref{fig:depth_panels} evaluates \eqref{eq:rs_def} at its own \(N\)
instances and size \(q\), which estimates \(\rml(p)\), and sets
\(S^\ast(p)=e^{q \, \rml(p)}\). Figs.~\ref{fig:depth_panels}(a)--(f) first reach their
paired floors at
$p=6,14,3,4,6,20$ within the measured ladders. The 64-QAM curve in (b)
crosses its independent floor between $p=16$ and 18 at $S=2^{14}$ and
reaches its paired floor from $p=14$ at $S=2^{18}$; the 4096-QAM curve in
(g) remains $1.12\times$ its floor at $p=30$ with ML-hit rate $0.997$. Predicted and
measured hit rates agree within $0.011$.

Power-law-versus-floor test on 16-QAM
$6\times6$, frozen before $p\ge4$, rejects the floor at
$\Delta\chi^2>500$. The six-point free-arm law (\(\pm\): weighted
least-squares standard error) is
\begin{equation}
  \rml(p)=0.586\,p^{-0.468\pm0.005}
  \label{eq:depthlaw66}
\end{equation}
$S^\ast(p)=S$ gives $p^\ast\approx(0.586 \, q/\ln S)^{1/0.468}=4,18,79$ for
$q=32,64,128$ at $S=2^{14}$; persistence would imply fixed-$S$ depth
$q^{2.1}$. On the common $p\ge6$ schedule branch, the seven fitted
exponents of Fig.~\ref{fig:depth_panels}(h) span $0.213$--$0.637$,
exceeding the preregistered universality threshold,
while $\alpha\log_2M=2.48$ within $6.9\%$. Two SNR-rerun pairs change
fitted slope by at most about $0.01$, so SNR mainly shifts the finite-size
floor.

\begin{table*}[t]
\centering
\caption{Symbols and Meanings}
\label{tab:symbols}
\footnotesize
\setlength{\tabcolsep}{2.0pt}
\renewcommand{\arraystretch}{0.88}
\begin{tabularx}{\textwidth}{|
>{\raggedright\arraybackslash}p{0.19\textwidth}|
>{\raggedright\arraybackslash}X||
>{\raggedright\arraybackslash}p{0.19\textwidth}|
>{\raggedright\arraybackslash}X|}
\hline
\multicolumn{2}{|l||}{\emph{System model and modulation}}&
\multicolumn{2}{l|}{\emph{Paths and scalar action}}\\
\hline
$n_t,n_r,\rho$; $M=L_RL_I,L_P$ & dimensions/aspect; constellation/PAM &
$\Acal_p,a,a^{(i)}$; $\widehat a_{\pm r},a\circ b$ & path alphabet; suffix/product\\
\hline
$b,q$; $H,\widetilde H,\widetilde y,\widetilde n$ & bits/qubits; complex/real model &
$\Phi_a,Q_\beta(a),\omega_a$ & phase, mixer weight, occupation\\
\hline
$\sigma_H^2,\sigma_n^2$; $\mathrm{SNR}_{\rm ps},\mathrm{SNR}_{\rm rx}$ & variances; SNR conventions &
$W,W^\star,\mathcal C_p$; $K_a,\bar\Phi$ & complex occupation/domain; contractions\\
\hline
$s,z,T,w_i,\varepsilon^{(\ell)}$; $i=(u_i,P_i,r_i)$ & words; map/weight; symbol-domain error; bit label &
$\Pcal_0,\Pcal_{A,\Bsf,D}$; $\Fcal^{(0)}_{A,\Bsf,0}$ & projected exponents; stationary action\\
\hline
$E_s,\Ecal_P$; $G,J_{ij},\Lcal_P$ & energies; Gram/coupling/field factor &
$\sigma_J^2,\sigma_h^2,A,\Bsf,D$ & scalar projected moments\\
\hline
\multicolumn{2}{|l||}{\emph{Ising/QAOA and metrics}}&
\multicolumn{2}{l|}{\emph{Sampling law and rates}}\\
\hline
$C,C_0,C_{\rm const}$; $J_{ij},h_i$ & Ising cost/offsets; coefficients &
$P^{\rm pl}_{q,p},\rs,\rpl,\rml$ & planted mass; design/planted/ML rates\\
\hline
$\bar J,\bar h,\bar C_q$; $U_C,U_B,B$ & normalized Hamiltonian; cost/mixer operators &
$\iota,\varpi$; $\mathcal I,q_{\rm ref}$ & Gram invariants/weight; reference set\\
\hline
$\gamma_r,\beta_r,p$; $C_{\rm ML},C_{\rm tx},C_{\rm rand}$ & angles/depth; reference costs &
$P^{\rm ML},S$; $\rml(p)=r_0p^{-\alpha}$ & ML mass/shots; fitted depth law\\
\hline
$R_p,V_p,V_p^{\rm corr}$ & residual; physical/surrogate densities &
$G_p,D_p,\Sigma$; $\mathcal L_p,\mathcal U_p$ & error Gram/kernel; lattice, support\\
\hline
$\langle\cdot\rangle$; $\E,\E_{J,h},\widehat\E_{\rm inst}$ & state/model/sample averages &
$\E_i,\E_{i<j},\E_{i\ne j},\E_{(i,j,k)}$ & uniform index-set means\\
\hline
\end{tabularx}
\end{table*}

\subsection{Energy-Objective Approximations (\(\Rea,\Reb\))}\label{sec:energy_sims}
These results are finite-size transfer evidence; they do not identify the
\(q_{\rm ref}=12\) training objective with a physical infinite-size \(V_p\) or
establish an asymptotic exponent, all-depth limit, adiabatic guarantee,
sampling law, or scalable advantage.

\subsubsection{Correlated versus Uncorrelated Angle Design}
\label{sec:corr_vs_uncorr}
The independent baseline is a pooled-variance i.i.d.-Gaussian proxy: zero-mean
Gaussian couplings and fields drawn at the pooled variances, the model the
earlier B/QPSK designs assume~\cite{gulbahar2024mimo}. It removes structured
means and species structure along with the correlations, so the gap below does
not isolate the correlation effect. Correlated and independent packs trained at
$q_{\rm ref}=12$ transfer over Table~\ref{tab:simulation_census}'s grid; at
$q=20$ the transmitted word equals ML throughout 10--20\,dB.
Fig.~\ref{fig:corr_depth_size} plots $1-R_p$ of \eqref{eq:Rp}, which is
$\langle\bar C_q\rangle/C_{\rm ML}$ by \eqref{eq:crand_zero}, and shows a
positive mean correlated-minus-independent gap at every depth, size, and SNR.
Averaged over the five SNR simulations of Fig.~\ref{fig:corr_depth_size}(a)--(e),
the $q=20$ gaps for $p=1,2,3,4$ are $0.102,0.030,0.080,0.071$, and over the
eight $q=20$ statevector points of the same simulation, the tested 16-QAM residual
decreases from $R_{p=1}\approx0.43$ to $R_{p=8}\approx0.10$.
As a check of finite-$q$ normalization and transfer (not of
Algorithm~\ref{alg:corrvp}), Fig.~\ref{fig:largen} reaches $q=512$ with
exact $p=1$ in $O(q^2)$ memory and $q=128$ with subset-truncated $p=2$:
the density in (a) stabilizes beyond $q\approx48$, and at $q=512$ the
$p=1$ ratios $\langle\bar C_q\rangle/C_{\rm tx}$ in (b) are $0.61/0.48$
(correlated/independent).

\subsubsection{Matched-Shot Placement Against Warm-Start Pipelines}
Warm-start (WS) pipelines couple a per-instance classical preprocessor
to shallow quantum layers; we reproduce a recent Gray-coded M-QAM
representative~\cite{paul2026warmstart}, namely WSLR-W, a WS linear-ramp
(LR) QAOA with Burer--Monteiro block-coordinate-descent (BM--BCD) soft
estimates, WS mixer, and printed mode decision, on identical instances
under its $\mathrm{SNR}_{\rm rx}$ convention at a common budget
$S=1024$ chosen by us. Table~\ref{tab:wslr} uses $\Rea$ without
instancewise angle optimization; ``b.o.s.'' (best of shots) pools both
ramp-rate sets. The train-once angles
track exact ML at both sizes with no per-instance classical solve, while
an ablation keeping only the warm-start product state matches the full
pipeline here: at these sizes the classical soft estimate dominates the
measured SER; this cautions against attributing gains inside warm-start
pipelines and, with decision-rule differences, prevents isolating the
quantum layers. These matched-budget numbers are not directly comparable
to \cite{paul2026warmstart}'s curves at its own settings.

\begin{table}[t]
\centering
\caption{16-QAM SER at \(p=5,S=1024\)
(\(N=3000/900\) for \(2\times2/4\times4\)).}
\label{tab:wslr}
\footnotesize
\setlength{\tabcolsep}{2.5pt}
\renewcommand{\arraystretch}{0.95}
\begin{tabular}{|l|c|c|c|c|}
\hline
 & \multicolumn{2}{c|}{$2\times2$} & \multicolumn{2}{c|}{$4\times4$} \\
Method & 12\,dB & 17\,dB & 12\,dB & 16\,dB \\
\hline
exact ML & 0.324 & 0.106 & 0.364 & 0.133 \\
\hline
\textbf{this work (offline)} & \textbf{0.324} & \textbf{0.106} & \textbf{0.366} & \textbf{0.142} \\
\hline
WSLR-W (printed mode rule)~\cite{paul2026warmstart} & 0.383 & 0.183 & 0.456 & 0.292 \\
\hline
WSLR-W (b.o.s.) & 0.330 & 0.118 & 0.388 & 0.205 \\
\hline
WS-only (no layers) & 0.331 & 0.125 & 0.384 & 0.198 \\
\hline
LR-QAOA (no warm start) & 0.578 & 0.562 & 0.843 & 0.824 \\
\hline
MMSE & 0.385 & 0.196 & 0.472 & 0.299 \\
\hline
\end{tabular}
\end{table}

\section{Conclusion}
\label{conclusion}
We formulated $M$-QAM MIMO ML detection as an Ising sampling problem for FTQ
hardware and constructed a correlation-aware, covariance-matched fixed-depth
scalar-action evaluator with an exact $O(p \,4^p)$ FWHT readout. We then
compared two offline angle-design routes: minimizing the energy readout
$V_p^{\rm corr}$ and minimizing the mean log ML-word mass. In simulations, the
sampling-designed angles exhibited the steeper finite-window ML-rate decay and
reached exact-ML BER floors at depths where the energetic design did not.
$200$-logical-qubit FTQ hardware targeted for 2029~\cite{ibm2025ftqc}
would provide the logical-qubit capacity for these depths, making offline  packs a
promising route to near-optimum detection.

\section*{Acknowledgments}
Calculations were performed partly at the T\"UB\.ITAK ULAKB\.IM High
Performance and Grid Computing Center (TRUBA) and Turkey's National Center for
High Performance Computing (UHeM), under grant 1026432026. OpenAI ChatGPT 5.6 and Anthropic Claude Fable 5 assisted with derivations, simulation, and analysis, and with manuscript editing. The author reviewed the derivations and all AI-assisted code, verified all reported results and figures, and takes full responsibility for the content of this article.

\section*{Data and Code Availability Statement}
All code/data are provided with the submission as a self-contained, pure-Python Code Ocean
compute capsule~\cite{repository}: the design and evaluation engines for both
objectives, the exact decoders, the frozen simulation records, and the scripts
that regenerate every figure.

\appendices
 
\section{}\label{app:action}

Computational-basis insertion in
$\langle\psi_p|\bar C_q|\psi_p\rangle$, $|\psi_p\rangle\triangleq|\gamma,\beta\rangle$,
leaves a common middle word
$z^{(0)}$ and slice words $z^{(\pm r)}$ (BGMZ's characteristic-function
derivative, Eq.~(D.17), and basis insertion, Eqs.~(D.21)--(D.24), in
\cite{bgmz2022constant}):
\begin{equation}
	\langle\bar C_q\rangle
	=2^{-q}\!\!\sum_{z^{(0)},\,z^{(\pm1)},\ldots,z^{(\pm p)}}
	\bar C_q(z^{(0)})\prod_{r=1}^{p}\eta_r \, \mathcal M_r^+ \, \mathcal M_r^- 
	\label{eq:path_sum}
\end{equation}
where $\mathcal M_r^+\triangleq
	\langle z^{(r+1)}|e^{-\ii\beta_rB}|z^{(r)}\rangle,\quad
	\mathcal M_r^-\triangleq
	\langle z^{(-r)}|e^{\ii\beta_rB}|z^{(-r-1)}\rangle$, $\eta_r\triangleq
	e^{-\ii\gamma_r[\bar C_q(z^{(r)})-\bar C_q(z^{(-r)})]}$ and $z^{(\pm(p+1))}=z^{(0)}$. Put $m_i=z_i^{(0)}$ and
$a^{(i)}_{\pm r}=z_i^{(\pm r)}z_i^{(\pm(r+1))}$; for any
$a\in\Acal_p=\{\pm1\}^{2p}$ let $\widehat a_{\pm r}=\prod_{\ell=r}^pa_{\pm\ell}$,
so $z_i^{(\pm r)}=m_i\widehat a^{(i)}_{\pm r}$. This is BGMZ's relative/gauge map; removing $m_i$
is exact for their sign-symmetric independent disorder, not for the Wishart
law. 
With \(s_\ell=(a_\ell+a_{-\ell})/2\), \(d_\ell=(a_\ell-a_{-\ell})/2\),
\(c_\ell=\cos\beta_\ell\) and \(\bar c_\ell=\sin\beta_\ell\), the reduced
phase and mixer weight are as follows (Eqs.~(3.8) and~(3.7) in
\cite{bgmz2022constant}, branch-conjugated):
\begin{equation}
	\Phi_a=\sum_{r=1}^p\gamma_r(\widehat a_r-\widehat a_{-r});\
	Q_\beta(a)=\prod_{\ell=1}^{p}c_\ell^{1+s_\ell}\bar c_\ell^{1-s_\ell}\ii^{\,d_\ell}
	\label{eq:path_weights}
\end{equation}

Our mixer weight is the complex conjugate of BGMZ's Eq.~(3.7) at fixed
labels (equivalently, the ket/bra branches are swapped);
$\{Q_\beta(a)\}$ are proper complex weights
(Definition~B.2 and Lemma~D.5 in \cite{bgmz2022constant}). The occupation
(the $n_a/n$ of Eq.~(D.32) in \cite{bgmz2022constant}) is
\begin{equation}
 \omega_a=\frac1q\sum_{i=1}^q\mathbf 1\big[a^{(i)}=a\big],\qquad
 \omega_a\ge0,\ \ \sum_{a\in\Acal_p}\omega_a=1
 \label{eq:occupation}
\end{equation}
Grouping paths by these counts, as in Eqs.~(D.36)--(D.37) of
\cite{bgmz2022constant}, yields after disorder averaging their multinomial
form Eq.~(4.1).
With $(a\circ b)_{\pm r}=a_{\pm r}b_{\pm r}$, the pair phase is
$\Phi_{a\circ b}$; the surrogate below drops the $m$-marks and
deterministic means, per (ii)--(iii) of Section~\ref{sec:r1obj}.

Gaussian-match the centered coefficients and set $m_i=1$, writing
$\bar J^c=(\bar J^c_{ij})_{i<j}$, $\bar h^c=(\bar h_i^c)_i$, and
$\Sigma_q\triangleq\Cov_{J,h}[\bar J^c;\bar h^c]$.
For $\xi=(\Lambda^\top,\Theta^\top)^\top$ with
$\Lambda_{ij}=\Phi_{a^{(i)}\circ a^{(j)}}=\Lambda_{ji}$ and
$\Theta_i=\Phi_{a^{(i)}}$, the characteristic function under the matched
Gaussian law is
$\E_{\rm G}[e^{-\ii(\bar J^{c\top}\Lambda+\bar h^{c\top}\Theta)}]
=e^{-\xi^{\!\top}\Sigma_q\xi/2}$.
\begin{definition}[Post-gauge projectable covariance]
\label{def:projectable}
A centered Gaussian surrogate is post-gauge projectable at fixed $p$ (motivated
by  basis grouping of Eqs.~(D.25)--(D.38) in
\cite{bgmz2022constant}) if a
functional $\Pcal_{\rm cov}$ of $\omega$ alone satisfies uniformly over \eqref{eq:occupation}:
\begin{equation}
 -\frac{1}{2q}\,\xi^{\!\top}\Sigma_q\,\xi
 \;\xrightarrow{\;q\to\infty\;}\;\Pcal_{\rm cov}(\omega) 
 \label{eq:projectable_def}
\end{equation}
\end{definition}
Configurations then reduce to
$W\in\mathcal C_p=\{W\in\mathbb C^{|\Acal_p|}:\sum_aW_a=1\}$ with exponent
$\sum_aW_a\log[Q_\beta(a)/W_a]+\Pcal_{\rm cov}(W)$, giving
\eqref{eq:free_energy}, whose readout is $V_p^{\rm corr}$
(Lemma~\ref{lem:readout}); this does not justify gauge removal, and the
uniform $-t$ shift is only a formal source. With \eqref{eq:corr_contractions},
\begin{align}
	\Pcal_{A,\Bsf,D}(W)
	&=\Upsilon_A(W)+\Upsilon_{\Bsf}(W)+\Upsilon_D(W)
	\label{eq:ABD_action}\\
	\Pcal_0(W)
	&=-\sigma_J^2\sum_{a,b}W_aW_b\Phi_{a\circ b}^2
	-\frac{\sigma_h^2}{2}\sum_a W_a\Phi_a^2
	\label{eq:P0}
\end{align}
where
$\Upsilon_A(W)\triangleq-\frac A2\sum_aW_aK_a(W)^2$,
$\Upsilon_{\Bsf}(W)\triangleq-\frac{\Bsf}{2}\bar\Phi(W)^2$, and
$\Upsilon_D(W)\triangleq-\frac D2\sum_{a,b}W_aW_b
\Phi_{a\circ b}(\Phi_a+\Phi_b)$.
With algebraic products, assume
$\Var(\bar J_{ij})=4\sigma_J^2/q$, $\Var(\bar h_i)=\sigma_h^2$,
$\Cov(\bar J_{ij},\bar J_{ik})=A/q^2$,
$\Cov(\bar h_i,\bar h_j)=\Bsf/q$, and
$\Cov(\bar J_{ij},\bar h_i)=D/q$, extending the independent-disorder
limits of Assumption~1 in \cite{bgmz2022constant} to shared-index
covariances; the diagonal terms yield \eqref{eq:P0} (Eq.~(D.61) in
\cite{bgmz2022constant} at interaction orders $r\le2$ with Gaussian
$g_r(\lambda)=-\lambda^2/2$), while uniform shared-index counting gives
$-\frac{A}{2q^3}\sum_i\sum_{j,k\ne i,\,j\ne k}
\Lambda_{ij}\Lambda_{ik}\to\Upsilon_A(W)$,
$-\frac{\Bsf}{2q^2}\sum_{i\ne j}\Theta_i\Theta_j
\to\Upsilon_{\Bsf}(W)$, and
$-\frac{D}{q^2}\sum_{i<j}\Lambda_{ij}(\Theta_i+\Theta_j)
\to\Upsilon_D(W)$, hence \eqref{eq:ABD_action}. Unconditional symmetry gives $D=0$;
conditioning in \eqref{eq:D_cond_MQAM} retains planted marks.

\section{}\label{app:vp_proofs}

\begin{proof}[Proof of Proposition~\ref{prop:planted_p1} and
\eqref{eq:planted_pp}]
Amplitude and conjugate expansion and uniform word averaging supply $2^{-2q}$;
Gaussian channel averaging via $\E\,e^{-X^\dagger DX}=\det(I+\Sigma D)^{-1}$
for $X\sim\mathcal{CN}(0,\Sigma)$ (valid for every $D$ with $\mathrm{Herm}(D)\succeq0$: whitening
$X=\Sigma^{1/2}Z$ and setting $A=\Sigma^{1/2}D\Sigma^{1/2}$ gives
$I{+}\mathrm{Herm}(A)\succ0$; factoring out $(I{+}\mathrm{Herm}\,A)^{1/2}$ and
unitarily diagonalizing the remaining skew part yields
$\det(I{+}A)^{-1}=\det(I{+}\Sigma D)^{-1}$, covering singular
$\Sigma$; convergent
since $\mathrm{Herm}(D_p)=\sigma_n^2c_pc_p^{\!\top}\succeq0$), applied
independently across the $n_r$
receivers, gives
\eqref{eq:planted_p1}. Agreement variables yield  symbol convolution,
and $p$ forward/backward replicas give $c_p,D_p,G_p$, \eqref{eq:planted_pp},
and $4p^2$ additive real coordinates.
\end{proof}
BGMZ occupation and projected exponents define the source-free action;
$\operatorname*{stat}_W$ selects Algorithm~\ref{alg:corrvp}'s branch in
$\mathcal C_p$; $(0)$ omits means, planted reference, and middle spins.
\begin{definition}[Formal scalar Gaussian source-free action]
\label{def:action}
For fixed $p,\gamma,\beta$ and positive-semidefinite covariance data with
$D=0$, the selected value is the covariance-matched extension of BGMZ's
saddle exponent (Eq.~(C.5) in \cite{bgmz2022constant}, their
$P\mapsto\Pcal_0+\Pcal_{A,\Bsf,0}$) as follows:
\begin{equation}
\Fcal_{A,\Bsf,0}^{(0)}
=\operatorname*{stat}_{W\in\mathcal C_p}
\Big[\sum_{a\in\Acal_p}W_a\log\frac{Q_\beta(a)}{W_a}
+\Pcal_0(W)+\Pcal_{A,\Bsf,0}(W)\Big]
\label{eq:free_energy}
\end{equation}
Continuation from $W=Q_\beta$ selects both branches; $0\log0\triangleq0$
on the active support; logarithms are the continuously compatible
lifts along the continuation path, not re-evaluated principal values.
\end{definition}

Stationarity gives \eqref{eq:fixed_point} (Eq.~(C.9) in \cite{bgmz2022constant};
their Eq.~(4.3) is the unit-normalized case via their Lemma~B.6), which
expands to \eqref{eq:corr_sce}  with the map defined on
	$\{W:\sum_{c}Q_\beta(c)\,e^{g_c(W)}\neq0\}$ :
\begin{equation}
W_a^\star=
\frac{Q_\beta(a)\,e^{g_a(W^\star)}}
{\sum_{c\in\Acal_p}Q_\beta(c)\,e^{g_c(W^\star)}},
\,
g_a(W)\triangleq\partial_{W_a}[\Pcal_0+\Pcal_{A,\Bsf,0}]
\label{eq:fixed_point}
\end{equation}
\begin{align}
W_a^\star&\propto Q_\beta(a)\exp\Big[
-2\sigma_J^2\textstyle\sum_b\Phi_{a\circ b}^2W_b^\star
-\tfrac{\sigma_h^2}{2}\Phi_a^2\nonumber\\
&\quad-\tfrac A2K_a(W^\star)^2
-A\textstyle\sum_bK_b(W^\star)
\Phi_{a\circ b}W_b^\star\nonumber\\
&\quad-\Bsf\,\bar\Phi(W^\star)\Phi_a\Big],
\qquad a\in\Acal_p,\quad\sum_{a\in\Acal_p}W_a^\star=1 
\label{eq:corr_sce}
\end{align}
For $A=\Bsf=0$, identifying $c_1^2=\sigma_h^2$ and
$c_2^2=2\sigma_J^2$ with the ket/bra branches conjugated ($\Pcal_0$ is then
BGMZ's Gaussian well-played $q_{\max}{=}2$ polynomial, so their
Lemma~B.6 gives unit normalization) reduces this to
BGMZ \cite[Eq.~(3.9)]{bgmz2022constant}; otherwise
correlation destroys the triangular sweep, so Algorithm~\ref{alg:corrvp}
solves $\mathsf F(W)=W-\mathsf T(W)=0$: $\mathsf T_a(W)$ are the components
of \eqref{eq:fixed_point} with its exponent $g_a$; for a direction
$w=(w_a)$, $\delta g_a$ (collected as $\delta g(W;w)$) is the derivative of
$g_a$ along $w$, and the Jacobian--vector product is the following; any failed continuation node aborts to \texttt{unresolved}:
\begin{align}
 (\mathsf F'(W)w)_a
 &=w_a-\mathsf T_a(W)\big[\delta g_a-\textstyle\sum_c
 \mathsf T_c(W)\delta g_c\big],\nonumber\\
 \delta g(W;w)
 &=-2\sigma_J^2(\Phi^2\!*\!w)-\Bsf \Phi\langle\Phi,w\rangle
 -A \big[K\!\cdot\!(\Phi\!*\!w)\nonumber\\
 &\quad+\Phi\!*\!(K\!\cdot\!w)+\Phi\!*\!(W\!\cdot\!(\Phi\!*\!w))\big]
 \label{eq:jvp}
\end{align}
Here $*$ is XOR convolution and $\cdot$ elementwise, without conjugation; a
Jacobian--vector product costs $O(p4^p)$ matrix-free.

\begin{proof}[Proof of Lemma~\ref{lem:readout}]
Let $\Fcal^{(0)}_{A,\Bsf,0}(t)$ be \eqref{eq:free_energy} with
$\Phi_a\mapsto\Phi_a-t$, $\Phi_{a\circ b}\mapsto\Phi_{a\circ b}-t$ in
$\Pcal_0+\Pcal_{A,\Bsf,0}$, so $K_a\mapsto K_a-t$, $\bar\Phi\mapsto\bar\Phi-t$
by $\sum_aW_a=1$; the source $t$ is neither QAOA time nor the homotopy
$\zeta$. At $t=0$, along the selected differentiable constrained branch,
$d_tF=\partial_tF+\sum_a(\partial_{W_a}F)\,d_tW_a^\star=\partial_tF$,
since the constrained gradient is a common Lagrange multiplier and
$\sum_ad_tW_a^\star=0$ (conditional on a differentiable branch
selection; cf.\ the saddle-value (real) envelope theorem, Thm.~4 in
\cite{milgrom2002envelope}); the explicit
$t$-derivative gives $V_p^{\rm corr}$ of \eqref{eq:Vpcorr}.
\end{proof}

\end{document}